\newtheorem{theorem}{Theorem}
\newcommand{\f}[2]{
		\mathchoice%
			{\dfrac{#1}{#2}}
	    	{\dfrac{#1}{#2}}
			{\frac{#1}{#2}}
			{\frac{#1}{#2}}
}
\newcommand{\dd}{\mathrm{d}}
\newcommand{\e}[1]{_{\text{#1}}}
\newcommand{\ddf}[3][]{%
        \ifthenelse{\equal{#1}{}}{%
                \ensuremath{\f{\dd#2}{\dd#3}}%
        }{%
                \ensuremath{\f{\dd^{#1}#2}{\dd{#3}^{#1}}}%
        }%
}
\newcommand{\Dp}[3][]{
        \ifthenelse{\equal{#1}{}}{%
                \ensuremath{\f{\partial#2}{\partial#3}}%
        }{%
                \ensuremath{\f{\partial^{#1}#2}{\partial{#3}^{#1}}}%
        }%
}
\newcommand{\MF}{\textsc{mf}}
\newcommand{\A}{\mathcal{A}}
\newcommand{\Hr}{\mathcal{H}}
\newcommand{\xetoile}{x^*}
\newcommand{\Imicro}{I_{\textrm{mc}}}
\newcommand{\Icano}{I_{\textrm{c}}}
\newcommand{\IN}{{\mathrm{int}}}
\newcommand{\obsmag}{{m}}
\newcommand{\obsm}{{m}}
\newcommand{\bromwich}{{b}}
\newcommand{\pstat}{\pi}
\begin{document}

\title{Non equivalence of dynamical ensembles and emergent non ergodicity}

\author{Hadrien Vroylandt}
\affiliation{Laboratoire de Physique Th\'eorique (UMR8627), CNRS, Univ. Paris-Sud, Universit\'e Paris-Saclay, 91405 Orsay, France}
\author{Gatien Verley}
\affiliation{Laboratoire de Physique Th\'eorique (UMR8627), CNRS, Univ. Paris-Sud, Universit\'e Paris-Saclay, 91405 Orsay, France}
\date{\today}

\begin{abstract}
Dynamical ensembles have been introduced to study constrained stochastic processes. In the microcanonical ensemble, the value of a dynamical observable is constrained to a given value. In the canonical ensemble a bias is introduced in the process to move the mean value of this observable.
The equivalence between the two ensembles means that calculations in one or the other ensemble lead to the same result. 
In this paper, we study the physical conditions associated with ensemble equivalence and the consequences of non-equivalence. For continuous time Markov jump processes, we show that ergodicity guarantees ensemble equivalence. For non-ergodic systems or systems with emergent ergodicity breaking, we adapt a method developed for equilibrium ensembles to compute asymptotic probabilities while caring about the initial condition. We illustrate our results on the infinite range Ising model by characterizing the fluctuations of magnetization and activity. We discuss the emergence of non ergodicity by showing that the initial condition can only be forgotten after a time that scales exponentially with the number of spins. 
\end{abstract}

\maketitle

%\tableofcontents

%%%%%%%%%%%%%%%%%%%%%%%%%%%%%%%%%%%%%%%%%%%%%%%%%%%%%%%%%%%%%%
\section{Introduction}
%%%%%%%%%%%%%%%%%%%%%%%%%%%%%%%%%%%%%%%%%%%%%%%%%%%%%%%%%%%%%%

Ensemble equivalence offers a convenient way of computing equilibrium potentials (entropy, free energy, grand potential, etc) by choosing the ensemble in which calculations are easier. Then, a Legendre transform provides the appropriated potential according to the environmental constraints on the system (isolated, thermostatted, chemostated, etc) \cite{Callen1985}. Ensemble equivalence holds in many cases, but it breaks down for systems with long range interactions between system constituents \cite{Barre2001,Campa2009,Bouchet2010} or in the presence of a phase transition treated at the mean field level \cite{Ellis2004,Costeniuc2005}. The equivalence of equilibrium ensemble has been studied in detail \cite{Campa2009,Barre2001,Bouchet2010,Touchette2004,Bouchet2008}, including in the framework of large deviation theory \cite{Touchette2003,Touchette2009}. 

More recently, different dynamical ensembles have been used by many authors to study stochastic processes \cite{Chetrite2015,Chetrite2013,Jack2010,Lecomte2007,Speck2016,Garrahan2016,Monthus2011,Budini2014} and the question of their equivalence has been an active field of research. Equilibrium and dynamical ensembles differ qualitatively in nature: the first one is made of states and the second of succession of states (either continuous or discrete) called trajectories. Then, the probabilities on the ensembles are also different, a state probability for the former and a probability of trajectories for the latter. Dynamical ensembles must be used when the considered variables are time-integrated observables (like work, irreversible heat exchanges or activity) because their statistics can only be computed using trajectory probabilities. 
Based on a choice of dynamical observable and of a stochastic process defining bare trajectory probabilities, one can define two dynamical ensembles: the microcanonical and the canonical ensembles. Within the first ensemble, the trajectories are filtrated on the value of the chosen dynamical observable. Within the second ensemble, the probability of trajectories are exponentially biased. This canonical ensemble is also called the \textit{s-ensemble} \cite{Garrahan2009}, the \textit{driven}, \textit{biased} or \textit{tilted} ensemble \cite{Jack2010} or \textit{Esscher transform}\cite{Feller2008}. The last ensemble has been used to study glass transition \cite{Garrahan2009,Keys2015,Biroli2013,Merolle2005,Jack2006,Garrahan2007,Hedges2009,Chandler2010,Speck2011} or as a numerical tool to evaluate rare event probabilities \cite{Nemoto2014,Ferre2018}. As indicated by their names, the way one introduces dynamical ensembles is in clear analogy with the way ensembles are defined in equilibrium statistical physics.

When two ensembles are equivalent, the mean value of the dynamical observable is the same in the two ensembles. 
Like for equilibrium ensembles, it is more convenient to make calculations (or simulations) within the canonical ensemble motivating the determination of the conditions for ensemble equivalence.
Large deviation theory allows to conclude on this point from the convexity of the large deviation function (LDF) or from the differentiability of the cumulant generating function (CGF) \cite{Chetrite2013,Chetrite2015}. 
Whenever ensembles are equivalent, LDF and CGF encode the same information and are related by Legendre transform.

% Problématique: objectif de l'article
In this paper, we aim in the first place at exploring the physical constraints associated with equivalence of microcanonical and canonical dynamical ensembles. In the second place, when ensemble equivalence does not hold, we provide a method to compute non-convex LDFs from the (moment) generating functions. 

% Résultats, annonce du plan
We identify that ergodicity and additivity are crucial for the ensemble equivalence. Ergodicity guaranties that the system can switch from any state to another in a reasonable time allowing to concatenate pieces of trajectories. Thanks to additivity, the probability of the dynamical observable on a trajectory made of two pieces is connected to a weighted average of the pieces probabilities (based on their duration) constraining the dynamical observable probability to ensure ensemble equivalence. In section \ref{sec:PathEnsembles}, we review results on ensemble equivalence and show for Markov jump processes that ergodicity implies the equivalence of dynamical ensembles for any additive observable. The contraposition says that non-ergodicity is required (but may not be sufficient) to break ensemble equivalence, and thus to obtain non-convex LDF. In section \ref{sec:expl-comp-ldf}, we illustrate our results with a simple non-ergodic system and derive the non-convex LDF for the system activity. For this system, non-ergodicity is imposed by construction because the transition rate matrix is reducible. The non-ergodicity impacts the LDF since it may change according to the probability of the initial state. Beyond this case, we look at the magnetization and activity of an infinite range Ising model for which ergodicity is broken in the thermodynamic limit. We show how to compute the non-convex LDF from the propagator of the generating function for the chosen observable(s) by adapting a method developed by Touchette in Ref.~\cite{Touchette2010,Touchette2008} for equilibrium ensembles. Applying this to the Ising model, we notice that the non equivalence of equilibrium ensembles is tightly connected to the non equivalence of the dynamical ensembles. Since the non-ergodicity of the Ising model is an emergent feature, the order of the thermodynamic and long time limits determines which of the convex or non-convex LDF is more appropriated. In section \ref{sec:LimitOrder}, we prove that the time to relax from the initial condition grows exponentially with the number of spins. In practice, this means that for time shorter than this relaxation time the system is not ergodic and one must use the non-convex LDF. For longer times of observation, the system is ergodic and one must use the convex LDF. Such arguments are conventional in the framework of equilibrium ensembles: we adapt them for dynamical ensembles and illustrate them precisely using the Ising model.

%%%%%%%%%%%%%%%%%%%%%%%%%%%%%%%%%%%%%%%%%%%%%%%%%%%%%%%%%%%%%%
\section{Microcanonical and Canonical Dynamical Ensembles}
%%%%%%%%%%%%%%%%%%%%%%%%%%%%%%%%%%%%%%%%%%%%%%%%%%%%%%%%%%%%%%
\label{sec:PathEnsembles}

In this section, we review for time-homogeneous jump processes some results on dynamical ensembles equivalence, see Refs.~\cite{Chetrite2013,Chetrite2015} and references therein for more general Markov processes.

\subsection{Definitions: stochastic process and dynamical ensembles}
\label{sec:general-theory}

The Markov jump process $X_t$ takes values in the finite state space $\Omega $ at all time $t \in [0,T]$. The probability that $X_t=x \in \Omega$ at time $t$, denoted $p(x,t)$, is the solution of the time-homogeneous master equation
\begin{equation}
  \label{eq:EqMaitress}
\Dp{}{t} p(x,t) = \sum_{y}  K(x,y) p(y,t)
\end{equation}
where $K(x,y)$ are the components of the Markov operator $\bm{K}$, i.e. the probability per unit time of the transition from $y$ to $x$. The Markov jump process is said to be irreducible if for any pair of states $x \neq y$ it exists a path that connects $x$ to $y$ (and $y$ to $x$), i.e. every transitions along the path have non-vanishing transition rates. On the opposite, the process is reducible when at least two subsets of states in $\Omega$ cannot be connected. Then, the process is non-ergodic. As a consequence, the ensemble average does not produce the same result as the time average since the process cannot explore the whole state space.

Alternatively, one may use the path-integral formalism to characterize the process $X_t$. In this case, we denote by $P[x]$ the path probability, where $[x]$ is the short notation for a realization of the process $X_{t}$ over the time interval $[0,T]$.

We aim at computing the statistics of a dynamical observable $\mathcal{O}$. The mean value of this observable $\mathcal{O}$ follows from
\begin{equation}
  \label{eq:meanAT}
  \langle \mathcal{O} \rangle = \int \mathcal{D} [x] P[x] \mathcal{O}[x],
\end{equation}
where $\int \mathcal{D}[x]$ denotes the sum over all possible paths. For simplicity, we consider an observable of the form
\begin{equation}
  \label{eq:AT}
\mathcal{O} \equiv \f{1}{T} \int_0^T f(X_t) \dd t + \f{1}{T} \sum_{0\leq t\leq T:\Delta X_t\neq 0}g(X_{t^+},X_{t^-})
\end{equation}
where $f(x)$ and $g(x,y)$ are arbitrary functions and the discrete sum is on the time $t$ at which the system changes of state, i.e. $\Delta X_{t} \equiv X_{t^{+}}-X_{t^{-}}\neq 0$. The $f$ dependent part enables to study the time average of a state function, and the $g$ dependent part the time average of jump quantities, like currents.  For example, choosing $f(x)=0$ and $g(x,y)=1$ will lead to the global activity of the system, i.e. the number of jumps per unit time in the trajectory. 

The \emph{microcanonical ensemble} is defined by filtering the trajectory ensemble with a condition on an observable, e.g. $\mathcal{O} = \mathfrak{o}$ where $ \mathfrak{o}$ is a possible value of the stochastic variable $\mathcal{O}$ such that its probability verifies $P_{T}(\mathcal{O}= \mathfrak{o}) > 0$. We write the path probability of these trajectories 
\begin{equation}
  \label{eq:proba-micro-path}
  P^{\mathfrak{o}}[x] = \f{P[x,\mathcal{O}= \mathfrak{o}]}{P_{T}(\mathcal{O_{T}}= \mathfrak{o})},
\end{equation}
where $P^{\mathfrak{o}}[x]$ is the conditional probability of the path $[x]$ given that for this path $\mathcal{O}= \mathfrak{o}$. The corresponding joint probability is denoted $P[x,\mathcal{O}= \mathfrak{o}]$.
In the microcanonical path ensemble, the observable $\mathcal{O}$ does not fluctuate and always achieves the same value for all trajectories in the ensemble.

The \emph{canonical ensemble} is defined by fixing the mean value of the observable $\mathcal{O}$ only. The path probability for this ensemble can be computed via tilting (or biasing) the process as follows:
\begin{equation}
  \label{eq:proba-cano-path}
  P_\gamma[x]=\f{e^{T\gamma \mathcal{O}}P[x]}{\langle e^{T\gamma \mathcal{O}}\rangle}.
\end{equation}
This path probability is normalized by construction.

\subsection{Path ensemble equivalence}

To discuss the dynamical ensemble equivalence, one needs to define when two path probabilities $P$ and $Q$ are equivalent. They do when
\begin{equation}
  \label{eq:equivalence-def}
  \lim\limits_{T \to + \infty}\f{1}{T}\ln \f{P[x]}{Q[x]} = 0
\end{equation}
almost everywhere with respect to $P$. This means that $P$ and $Q$ are equal up to subexponential terms in $T$ for almost all paths. As a consequence, the mean value at large time of any observable will be the same if computed with any one of the two equivalent path ensembles.

Assuming a large deviation principle for $\mathcal{O}$~\cite{Touchette2009}, the LDF function $I(\mathfrak{o})$ provides the rate of decay of the probability density function $P_{T}(\mathcal{O}= \mathfrak{o})$ in the limit $T\to +\infty$ such that
\begin{equation}
  \label{eq:LDF-observable}
  P_{T}(\mathcal{O}= \mathfrak{o}) \underset{T \to + \infty}{\simeq} e^{-T I(\mathfrak{o})}
\end{equation}
A LDF indicates the rate at which the probability density function of a time extensive observable concentrates in its most likely value when time goes by.

Assuming a large deviation principle, Chetrite and Touchette have shown that the convexity of $I(\mathfrak{o})$ determines whether the microcanonical and canonical path ensembles are equivalent~\cite{Chetrite2015}, in such a case $P^{\mathfrak{o}}$ and $P_\gamma$ satisfy Eq.~(\ref{eq:equivalence-def}). They distinguished three cases:
\begin{itemize}
\item[] \textbf{(Equivalence)} If $I(\mathfrak{o})$ is a stricly convex function at $\mathfrak{o}$, then there exists a unique $\gamma \in \mathbb{R}$  such that  $P^{\mathfrak{o}}$ and $P_\gamma$ are equivalent.
\item[] \textbf{(Non equivalence)} If $I(\mathfrak{o})$ is a non-convex function at $\mathfrak{o}$, then there are no $\gamma$ such that  $P^{\mathfrak{o}}$ and $P_\gamma$ are equivalent.
\item[] \textbf{(Partial equivalence)} If $I(\mathfrak{o})$ is a convex function but not strictly convex function at $\mathfrak{o}$, then numerous values of $\mathfrak{o}$ correspond to the same $\gamma$: It may correspond to linear parts in a convex function or to a point of a non-convex function at which the slope is not unique.
\end{itemize}

The convexity of the LDF $I(\mathfrak{o})$ is connected to the differentiability of the CGF. The CGF $\phi(\gamma)$ is defined as
\begin{equation}
  \label{eq:def-CGF}
  \phi(\gamma) \equiv \lim\limits_{T\to +\infty} \f{1}{T} \ln \langle e^{T\gamma \mathcal{O}}\rangle.
\end{equation}
The CGF is the highest eigenvalue of the titled matrix $\bm{K}_{\gamma}$ of elements~\cite{Lebowitz1999}
\begin{equation}
  \label{eq:tiltedmatrix}
  K_\gamma(x,y) \equiv \left\{\begin{array}{ll} K(x,y)e^{\gamma g(x,y)}  & \text{ if } x \neq y \\K(x,x)+\gamma f(x) & \text{ if } x=y \end{array} \right. .
\end{equation}
The CGF is also the Legendre-Fenchel transform of the LDF
\begin{equation}
  \label{eq:legendre-cgf}
   \phi(\gamma)= \max\limits_{\mathfrak{o}} (\gamma \mathfrak{o}-I(\mathfrak{o})).
 \end{equation}
Conversely, we define $I^{**}(\mathfrak{o})$ as the Legendre-Fenchel transform of the CGF
\begin{equation}
  \label{eq:gartner-ellis}
  I^{**}(\mathfrak{o}) \equiv \max\limits_{\gamma}( \gamma \mathfrak{o}-\phi(\gamma))= \gamma(\mathfrak{o}) \mathfrak{o} - \phi(\gamma(\mathfrak{o})),
\end{equation}
with $\gamma(\mathfrak{o})$ the value of $\gamma$ realizing the maximum. The  G\"artner-Ellis theorem states that if $\phi(\gamma)$ is a differentiable function then $I(\mathfrak{o})=I^{**}(\mathfrak{o})$. The properties of the Legendre-Fenchel transform implies that $I^{**}(\mathfrak{o})$ is a convex function. The strict convexity of the LDF and the differentiability of the CGF are then dual conditions. Therefore we have the three following cases:
\begin{itemize}
\item[]\textbf{(Equivalence)} If $\phi(\gamma(\mathfrak{o}))$ is differentiable at $\gamma(\mathfrak{o})$, then $I(\mathfrak{o})=I^{**}(\mathfrak{o})$ from the G\"artner-Ellis theorem, and moreover $\mathfrak{o}= \partial_\gamma \phi(\gamma)$. $I(\mathfrak{o})$ is then a strictly convex function at $\mathfrak{o}$ and the equivalence holds.
\item[]\textbf{(Non equivalence)} If $\phi(\gamma(\mathfrak{o}))$ is not differentiable at $\gamma(\mathfrak{o})$ and $I(\mathfrak{o}) \neq I^{**}(\mathfrak{o})$, then $I(\mathfrak{o})$ is a non-convex function at $\mathfrak{o}$ and we have a non equivalence between the microcanonical and canonical ensembles.
\item[]\textbf{(Partial Equivalence)} If $\phi(\gamma(\mathfrak{o}))$ is not differentiable at $\gamma(\mathfrak{o})$ and $I(\mathfrak{o}) = I^{**}(\mathfrak{o})$, then $I(\mathfrak{o})$ is a convex function but not strictly convex function at $\mathfrak{o}$ and we have a partial equivalence between the microcanonical and canonical ensembles.
\end{itemize}
Hence, the non-equivalence of ensembles prevents to compute the LDF from the CGF since $ I^{**}(\mathfrak{o})$ is only the convex hull of $I(\mathfrak{o})$.  In Ref.~\cite{Chetrite2015}, the authors conjectured a connection between non equivalences and non ergodicity. We support this idea in the following by confirming that non-ergodicity is \emph{required} to observe a non equivalence of ensemble, but it is not a sufficient condition.

\subsection{Proof of ensemble equivalence for ergodic process}
\label{sec:origin-non-convexity}

% CHAPEAU
The proof of ensemble equivalence amounts to determine the convexity of a LDF function. In this section, we show how the LDF for observable $ \mathcal{O}$ can be determined from the convex LDF of more general observables, and how the convexity property may be inherited in the process.

% CONTRACTION
Whenever a stochastic variable depends of a set of other stochastic variables, one can infer the LDF of the former from the LDF of the latters. This operation is called a \emph{contraction}. For instance, the observable $ \mathcal{O}$ of Eq.~(\ref{eq:AT}) is a function of the empirical occupation ratio $R$ and the jump rate $C$
\begin{equation}
  \label{eq:generalObservables}
 \mathcal{O}  = \mathcal{O}(R,C) = \sum_{x} f(x) R(x) + \sum_{x,y} C(x,y) g(x,y).
\end{equation}
The occupation ratio is defined as the relative time spent in each state, say $x$ here
\begin{equation}
  \label{eq:Occupation}
  R(x) \equiv \f{1}{T} \int_0^T\delta(X_t=x) \dd t,
\end{equation}
where $\delta$ is an indicator function being $1$ if the condition is satisfied and $0$ otherwise. 
The jump rate $ C(x,y)$ counts the number of transitions between two states (from $y$ to $x$) per unit time
\begin{equation}
  \label{eq:Jumpfractions}
  C(x,y) \equiv \f{1}{T}  \sum_{0\leq t\leq T:\Delta X_t\neq 0} \delta(X_{t^+}=x) \delta(X_{t^-}=y).
\end{equation}
The occupation ratios and the jump rates have a joint probability for $R=r$ and $C=c$ that satisfies a large deviation principle
\begin{equation}
  \label{eq:LDF-R-and-C}
  P_{T}(R=r,C=c) \underset{T \to + \infty}{\simeq} e^{-T I(r,c)},
\end{equation}
defining the large deviation function $I(r,c)$, also called the level 2.5 LDF\cite{Maes2008,Barato2015,Bertini2012}. Then, the contraction 
\begin{equation}
  \label{eq:Ia}
  I(\mathfrak{o}) = \min_{r,c \text{ s.t. } \mathcal{O}(r,c) = \mathfrak{o}} I(r,c)
\end{equation}
leads to the LDF for the observable $\mathcal{O}$. In general terms, a contraction corresponds to the minimization of a multivariate LDF under a condition encoding how the stochastic variables are related \cite{Touchette2009}. 

% INHERITANCE OF CONVEXITY
Since ensemble equivalence for a stochastic variable relies on the convexity of the corresponding LDF, it is crucial to determine (i) whether the LDF $I(r,c)$ is convex and (ii) whether the convexity can be inherited upon contraction. The following theorem appearing in Ref.~\cite{Boyd2004} provides an answer to point (ii) when the new variable is additive:
\begin{theorem}\label{th:convexite}
Let $h(\bm{x},\bm{z})$ be a convex function and $U(\bm{z})$ an additive function, i.e. a function verifying
\begin{equation}
U(\alpha \bm{z}_1 + (1-\alpha) \bm{z}_2)=\alpha U( \bm{z}_1) +(1-\alpha) U(\bm{z}_2),
\end{equation}
then
\begin{equation}
  \label{eq:min-cont-g}
  d(\bm{y}) = \min_{\bm{x}\in \mathcal{C},\, \bm{z} \in \mathcal{C}_{\bm{y}}} h(\bm{x},\bm{z}) \text{ with }  \mathcal{C} \text{ convex, and }\mathcal{C}_{\bm{y}} = \{ \bm{z} \,|\, U(\bm{z}) = \bm{y}\}
\end{equation}
is a convex function.
\end{theorem}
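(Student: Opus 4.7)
The plan is to prove convexity directly from the definition: take two points $\bm{y}_1, \bm{y}_2$ and $\alpha \in [0,1]$, then show
\begin{equation*}
d(\alpha \bm{y}_1 + (1-\alpha)\bm{y}_2) \leq \alpha\, d(\bm{y}_1) + (1-\alpha)\, d(\bm{y}_2).
\end{equation*}
I would start by picking minimizers (or, to avoid attainment issues, $\varepsilon$-minimizers) $(\bm{x}_1, \bm{z}_1)$ and $(\bm{x}_2, \bm{z}_2)$ for the right-hand side, so that $\bm{x}_i \in \mathcal{C}$, $U(\bm{z}_i) = \bm{y}_i$, and $h(\bm{x}_i,\bm{z}_i) = d(\bm{y}_i)$.

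Next I would construct the natural convex combination $\bm{x}_\alpha = \alpha \bm{x}_1 + (1-\alpha)\bm{x}_2$ and $\bm{z}_\alpha = \alpha \bm{z}_1 + (1-\alpha)\bm{z}_2$ and verify that this pair is feasible for the minimization defining $d(\alpha \bm{y}_1 + (1-\alpha)\bm{y}_2)$. The feasibility in $\bm{x}$ is immediate from convexity of $\mathcal{C}$. The feasibility in $\bm{z}$ is precisely where the additivity hypothesis is used: by the assumed affinity of $U$,
\begin{equation*}
U(\bm{z}_\alpha) = \alpha U(\bm{z}_1) + (1-\alpha) U(\bm{z}_2) = \alpha \bm{y}_1 + (1-\alpha) \bm{y}_2,
\end{equation*}
so $\bm{z}_\alpha \in \mathcal{C}_{\alpha \bm{y}_1 + (1-\alpha)\bm{y}_2}$.

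The proof then closes by combining the definition of $d$ as an infimum with the convexity of $h$ jointly in its two arguments:
\begin{equation*}
d(\alpha \bm{y}_1 + (1-\alpha)\bm{y}_2) \leq h(\bm{x}_\alpha, \bm{z}_\alpha) \leq \alpha h(\bm{x}_1,\bm{z}_1) + (1-\alpha) h(\bm{x}_2,\bm{z}_2) = \alpha d(\bm{y}_1) + (1-\alpha) d(\bm{y}_2),
\end{equation*}
which is the desired inequality.

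There is no real obstacle here: the argument is the classical partial-minimization-preserves-convexity lemma, and the only subtlety is making sure the constraint set stays compatible with convex combinations of $\bm{y}$. That compatibility is exactly what the additivity of $U$ buys, which is why the hypothesis cannot be weakened to mere convexity of the sublevel structure of $U$. If one wants to be fully rigorous about the infimum possibly not being attained, one replaces minimizers by $\varepsilon$-minimizers and lets $\varepsilon \to 0$ at the end; otherwise the proof is a three-line calculation.
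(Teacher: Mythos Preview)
Your proposal is correct and follows essentially the same approach as the paper's own proof: pick minimizers for $d(\bm{y}_1)$ and $d(\bm{y}_2)$, use convexity of $\mathcal{C}$ and additivity of $U$ to check that their convex combination is feasible for $d(\alpha\bm{y}_1+(1-\alpha)\bm{y}_2)$, and conclude via the joint convexity of $h$. Your remark on $\varepsilon$-minimizers is a harmless refinement that the paper omits.
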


\begin{proof}
  We consider $(\bm{x}^*_1,\bm{z}^*(\bm{y}_1))$ the couple of variables realizing the minimum in Eq.~(\ref{eq:min-cont-g}) for $\bm{y}_1$, and similarly $(\bm{x}^*_2,\bm{z}^*(\bm{y}_2))$ for $\bm{y}_2$. The convexity of $\mathcal{C}$ implies that $\alpha \bm{x}^*_1+(1-\alpha)\bm{x}^*_2 \in \mathcal{C}$ when $\alpha \in [0,1]$. Moreover, the additivity of $U$ implies that $\alpha \bm{z}^*(\bm{y}_1) +(1-\alpha) \bm{z}^*(\bm{y}_2) \in \mathcal{C}_{\alpha \bm{y}_1 +(1-\alpha) \bm{y}_2} $. Hence, we have
  \begin{eqnarray*}
    d(\alpha \bm{y}_1 +(1-\alpha) \bm{y}_2) 
    & = & \min_{\bm{x}\in \mathcal{C},\, \bm{z} \in \mathcal{C}_{\alpha \bm{y}_1 +(1-\alpha) \bm{y}_2} } h(\bm{x},\bm{z} ) \\
    & \leq & h(\alpha \bm{x}^*_1+(1-\alpha)\bm{x}^*_2,\alpha \bm{z}^*(\bm{y}_1) +(1-\alpha) \bm{z}^*(\bm{y}_2)), \\
    & \leq& \alpha h(\bm{x}^*_1,\bm{z}^*(\bm{y}_1)) + (1-\alpha) h(\bm{x}^*_2,\bm{z}^*(\bm{y}_2)), \\
                                  & \leq & \alpha d(\bm{y}_1) + (1-\alpha) d(\bm{y}_2),\\
  \end{eqnarray*}
where we get the third line by using the convexity of $h$, and the fourth line using our knowledge of the minimizers of Eq.~(\ref{eq:min-cont-g}) for both $\bm{y}_1$ and $\bm{y}_2$.
\end{proof}

% CONVEXITY LEVEL 2.5
We now address the point (i) about the convexity of the LDF $I(\bm{r},\bm{c})$. This level 2.5 LDF is explicitly known for \emph{ergodic} Markov jump processes in continuous-time~\cite{Maes2008,Barato2015,Bertini2012} 
\begin{equation}
  \label{eq:LDF2.5}
  I(\bm{r},\bm{c}) =    \sum_{x,y\neq x} \left( r(y)K(x,y)-c(x,y)+  c(x,y)\ln \f{c(x,y)}{r(y) K(x,y)} \right).
\end{equation}
It is convex since it writes as the sum of a linear part $\sum_{x,y\neq x} \left[ r(y) K (x,y) - c(x,y) \right] $ and a Kullback-Leibler divergence between $\bm{c}$ and $\bm{Kr}$ 
\begin{equation}
	D(\bm{c}||\bm{Kr}) \equiv \sum_{x,y\neq x} c(x,y)\ln \f{c(x,y)}{ K(x,y)r(y)}.
\end{equation}
This Kullback-Leibler divergence is convex as a consequence of the log-sum inequality (or Jensen's inequality)~\cite{Csiszar2004}.
% DO NOT REMOVE : 
%\begin{equation}
%D \left( \sum_{i} \alpha^{i} \bm{c}^{i} \right| \left| \sum_{i} \alpha^{i} \bm{K} \bm{r^{i}} \right) \le 	\sum_{i}  D \left( \alpha^{i} \bm{c}^{i} \right| \left|  \alpha^{i} \bm{K} \bm{r}^{i} \right)
%\end{equation}
%for $\sum_{i=1}^{2}  \alpha^{i} = 1$ and $(\bm{r}^{i},\bm{c}^{i})$ with $i=1,2$ two choices of occupation ratios and jump rates. 
Using the previous theorem with $\bm{z}=(\bm{r},\bm{c})$, the convexity of the level 2.5 LDF and the additivity of $\mathcal{O}$ for both the occupation ratio and the jump rate,
we conclude the contracted LDF $I(\mathfrak{o})$ is also convex: The ensemble equivalence holds for ergodic Markov jump processes. In particular, the ensemble equivalence holds for irreducible finite size Markov jump processes since they are always ergodic.

Alternatively, one can prove heuristically the ensemble equivalence using the differentiability of the CGF, that is the dual condition with respect to the LDF convexity. To determine the CGF differentiability, one consider the CGF as the highest eigenvalues of the titled matrix $\bm{K}_{\gamma}$ defined in Eq.(\ref{eq:tiltedmatrix}). From Perron-Fr\"obenius theorem for irreducible finite size matrices like $\bm{K}_{\gamma}$, the highest (real) eigenvalue is unique: Its multiplicity is always one independently of the value of the counting field and no crossing between the two highest eigenvalues can occur. Moreover, the components of the tilted matrix are differentiable yielding that the CGF is itself differentiable. By duality, the LDF is strictly convex. 

From this analysis, we conclude that the non-equivalence between the microcanonical and canonical ensembles based on observables like $\mathcal{O}$ may only happen when the Markov operator is reducible or of infinite dimension. Then, the LDF of the variable $\mathcal{O}$ may not be convex.

We emphasize that we made no assumption on the definition of dynamical rates, hence the system may be in or out of equilibrium. We used simple graph Markov processes, but the discussion can be transposed to include multiple channel transitions.

\section{Explicit LDF calculation for non-ergodic systems}
\label{sec:expl-comp-ldf}

In this section, we adapt the method developed in Ref.~\cite{Touchette2010} for equilibrium ensembles to compute non-convex LDF from their corresponding generating function. First, we explain how to compute a non-convex LDF from the propagator of the generating function for a simple non-ergodic system. We then apply the same method to infinite range Ising model for which the non-ergodicity emerges in the thermodynamic limit.

\subsection{A four state model with non-convex LDF of activity}
\label{sec:calc-non-conv}

Let's consider a four state system made of two subsystems with two states $(1,2)$ and $(3,4)$. The four state system evolves according to the master equation
\begin{equation}
  \label{eq:toyexampleMatrix}
  \ddf{}{t}
  \begin{pmatrix}
    p_1 \\ p_2 \\ p_3 \\ p_4
  \end{pmatrix}  =    
  \begin{pmatrix}
    -1 & 1 & 0 &0 \\
    1 & -1 & 0& 0 \\
    0 &0 & -2 & 2 \\
     0 &0&  2 & -2 \\
   \end{pmatrix} \cdot
    \begin{pmatrix}
    p_1 \\ p_2 \\ p_3 \\ p_4
  \end{pmatrix}.
\end{equation}
The Markov operator is reducible yielding to a non ergodic process. We aim at computing the system activity assuming that one cannot distinguish in which subsystem a transition occurs. The activity rate $A$ is the total number of jumps per unit time: It is given by Eq.~(\ref{eq:AT}) when $f(x)=0$ and $g(x,y)=1$ for all $(x,y)$. By definition of the Markov operator in the right hand side of Eq.~(\ref{eq:toyexampleMatrix}), the activity probability distribution of each subsystem is a Poisson distribution of mean value $T$ for the subsystem $(1,2)$ and $2T$ for the subsytem $(3,4)$, because there is $1$ and $2$ jumps per unit time in each subsystem respectively. The probability distribution of the total number of jumps is the sum of Poisson distributions weighted by the initial probability $p_{\mathrm{i},n}$ to start in state $n$. The probability of having $aT$ jumps after a time $T$ is
\begin{equation}
  \label{eq:pdfToyExample}
  P_{T}(A=a) = (p_{\mathrm{i},1}+p_{\mathrm{i},2}) \f{(T)^{aT} e^{-T}}{(aT)!} +  (p_{\mathrm{i},3}+p_{\mathrm{i},4})\f{(2T)^{aT} e^{-2T}}{(aT)!}
\end{equation}

For $T$ large enough, $P_{T}(A=a)$ will be bimodal, and the microcanonical LDF reads
\begin{equation}
  \label{eq:LDFAnalyticToyExample}
  \Imicro(a) = -\lim\limits_{T \to + \infty} \f{1}{T} \ln  P_{T}(A=a) = \left\{
    \begin{array}{ll}
        a\ln a -a +1 & \mbox{if } a < 1/\ln 2\\
        a \ln \f{a}{2} -a +2 & \mbox{if } a \geq 1/ \ln 2
    \end{array}
\right. ,
\end{equation}
where we have used Stirling formula and chosen the minimum of the two LDF corresponding to each Poisson distribution. The above microcanonical LDF of activity is shown in Fig.~\ref{fig:SimpleExample}b. It is crucial to note that the initial condition plays a fundamental role here: if the system never starts in states $1$ or $2$, i.e. $p_{\mathrm{i},1}=p_{\mathrm{i},2}=0$, the LDF will include the branch corresponding to the second line of Eq.~(\ref{eq:LDFAnalyticToyExample}) only. The non-ergodicity impacts the long time statistics of the dynamical observables through the choice of initial conditions.

For the four state model, it is straightforward to determine the probability density function of activity, but for other systems or observables this task may be more challenging: one must often compute the CGF instead. Let's derive the result of Eq.~(\ref{eq:LDFAnalyticToyExample}) in this way using the propagator for the generating function of the activity defined by $ G(x_{\mathrm{f}},x_{\mathrm{i}},\gamma)=\langle e^{T\gamma A} \rangle_{x_{\mathrm{f}},x_{\mathrm{i}}} $, where the subscripts $x_{\mathrm{i}}$ and $x_{\mathrm{f}}$ denote respectively the initial and final states of the trajectories appearing in the average. Using standard approach \cite{VanKampen2007,Touchette2009}, the propagator is obtained from the biased operator
\begin{equation}
  \label{eq:biasedOpToyExemple}
  \bm{K}_{\gamma} =
    \begin{pmatrix}
    -1 & e^\gamma & 0 &0 \\
    e^\gamma & -1 & 0& 0 \\
    0 &0 & -2 & 2e^\gamma \\
     0 &0&  2e^\gamma & -2 \\
   \end{pmatrix}
\end{equation}
as the components of its matrix exponential $\left[\exp ( T \bm{K}_{\gamma}) \right](x_{\mathrm{f}},x_{\mathrm{i}}) = G(x_{\mathrm{f}},x_{\mathrm{i}},\gamma)$. Using eigenvalues and eigenvectors decomposition of $\bm{K}_{\gamma}$, we make explicit the matrix exponential
\begin{multline}
  \label{eq:propagatorToyExemple}
  \exp \left[ T \bm{K}_{\gamma} \right] =
  e^{T(-1+e^\gamma)}  \begin{pmatrix}1/2\\1/2\\0\\0 \end{pmatrix}\cdot \begin{pmatrix}1&1&0&0 \end{pmatrix}   
  +e^{2T(-1+e^\gamma)} \begin{pmatrix}0\\0\\1/2\\1/2 \end{pmatrix}\cdot \begin{pmatrix}0&0&1&1 \end{pmatrix} \\
  +  e^{-T(1+e^\gamma)}  \begin{pmatrix}1/2\\-1/2\\0\\0 \end{pmatrix}\cdot \begin{pmatrix}1&-1&0&0 \end{pmatrix}
  + e^{-2T(1+e^\gamma)}  \begin{pmatrix}0\\0\\1/2\\-1/2 \end{pmatrix}\cdot \begin{pmatrix}0&0&1&-1 \end{pmatrix},
\end{multline}
The orthogonal basis of eigenvectors has normalized right eigenvectors, and the scalar products of the left and right eigenvectors associated to the same eigenvalue are all equal to $1$. We notice that the eigenvectors separate in two sets whose supports are disjoints and correspond to each subsystem respectively. We remark also that the above propagator should be norm conserving when $\gamma = 0$, but the two terms in the second line of Eq.~(\ref{eq:propagatorToyExemple}) do not fulfill this requirement. We do not consider then as physical and keep only the two first terms in Eq.~(\ref{eq:propagatorToyExemple}).
Then, the activity LDF is recovered from this propagator by first summing over the initial and final states 
\begin{equation} \label{eq:GenFunction}
\langle e^{T\gamma A} \rangle = \sum_{x_{\mathrm{f}},x_{\mathrm{i}}}  G(x_{\mathrm{f}},x_{\mathrm{i}},\gamma) p_{\mathrm{i}}(x_{\mathrm{i}}) ,
\end{equation}
second, applying an inverse Laplace transform, and finally by taking the limit $T \rightarrow \infty$. For ergodic systems, this procedure leads to the same LDF whatever the order of these operations. On the contrary, the order matters for non-ergodic systems. 

For our $4$ state model, the generating function writes
 \begin{equation}
   \label{eq:GFToyExample}
    \langle e^{T\gamma A} \rangle = (p_{\mathrm{i},1}+p_{\mathrm{i},2}) e^{T(-1+e^\gamma)}+ (p_{\mathrm{i},3}+p_{\mathrm{i},4}) e^{2T(-1+e^\gamma)}.
  \end{equation}
Its inverse Laplace transform yields the probability density function of activity given in Eq.~(\ref{eq:pdfToyExample}). However, first computing the long time limit of the generating function leads to the CGF
  \begin{equation}
    \label{eq:CGFToyExemple}
    \phi(\gamma) = \lim\limits_{T \to + \infty} \f{1}{T} \ln \sum_{x_{\mathrm{f}},x_{\mathrm{i}}} G(x_{\mathrm{f}},x_{\mathrm{i}},\gamma)p_{\mathrm{i}}(x_{\mathrm{i}})= \left\{
    \begin{array}{ll}
       e^{\gamma}-1 & \mbox{if } \gamma < 0,\\
        2( e^{\gamma}-1) & \mbox{if } \gamma \geq 0
    \end{array}
\right. ,
\end{equation}
as long as $p_{\mathrm{i}}(x) > 0$ for all x.
Noticing that the limit $T \to + \infty $ enables to use a saddle-point method to approximate the inverse Laplace transform into a Legendre-Fenchel transform, the asymptotic probability of activity follows from its corresponding LDF 
  \begin{equation}
    \label{eq:LDFConvexToyExemple}
      \Icano(a) = \max_{\gamma} \left[ a\gamma- \phi(\gamma)  \right] =   \left\{
    \begin{array}{ll}
      a\ln a -a +1 & \mbox{if } a < 1\\
      0 & \mbox{if } a \in [1,2] \\
        a \ln \f{a}{2} -a +2 & \mbox{if } a > 2
    \end{array}
\right. ,
\end{equation}
that is not the one of Eq.~(\ref{eq:LDFAnalyticToyExample}).
The above LDF is convex because a Legendre-Fenchel transform only yields convex functions by definition, while the LDF of Eq.~(\ref{eq:LDFAnalyticToyExample}) is not convex.

Alternatively, one may obtain the microcanonical LDF by taking the long time limit on the propagator of the generating function of Eq.~(\ref{eq:propagatorToyExemple}), and not on the generating function itself, yielding
\begin{equation}
  \label{eq:ToyExampleLargeTimepropGF}
  \phi_{x_{\mathrm{i}}}(\gamma)  = \lim\limits_{t \to + \infty} \f{1}{t} \ln G(x_{\mathrm{f}},x_{\mathrm{i}},\gamma,t)  = \left\{
    \begin{array}{ll}
		(e^\gamma-1) 	& \text{ if } x_{\mathrm{i}}=1,2 \\
		2(e^\gamma-1)	& \text{ if } x_{\mathrm{i}}=3,4
    \end{array}
\right. .
\end{equation}

\begin{figure}
  \centering
  \includegraphics[width=\columnwidth]{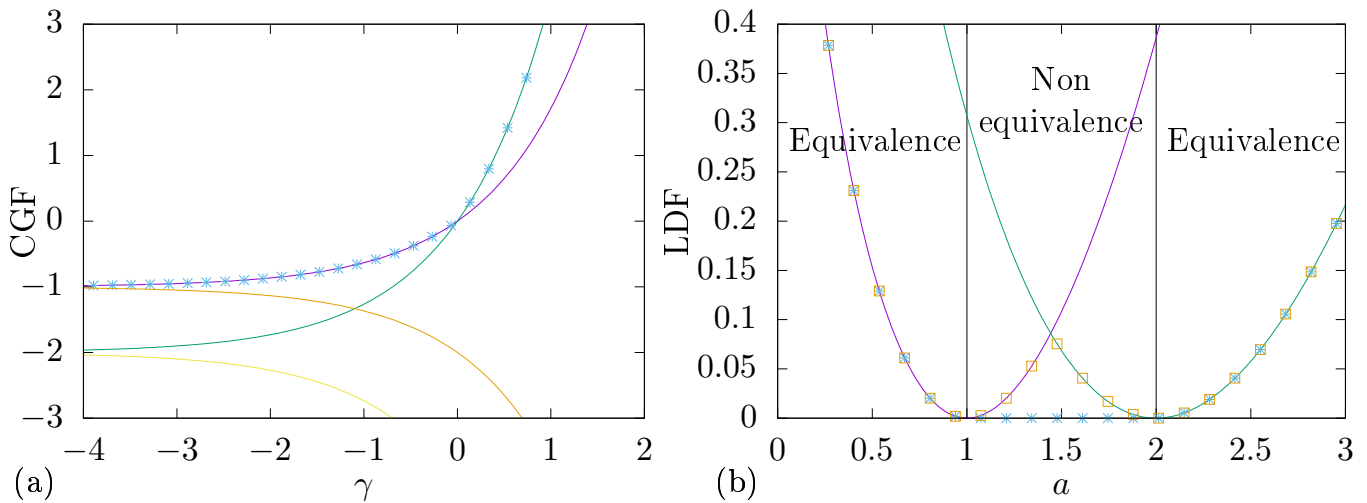}
  \caption[Simple Example]{(a) Four eigenvalues of the biased operator (\ref{eq:biasedOpToyExemple}) (solid line) and corresponding CGF (cross).  (b) Partial LDFs (solid lines) corresponding to the Legendre transform of the two highest eigenvalues, canonical LDF (cross) and microcanonical LDF (squares). 
  %The two Legendre-Fenchel transforms of the two monotonously decreasing eigenvalues are not represented: they corresponds to negative activities that are ill defined. 
  }
  \label{fig:SimpleExample}
\end{figure}
The Legendre-Fenchel conjugate of these two branches associated to different initial states are precisely the two branches of the microcanonical LDF in Eq.~(\ref{eq:LDFAnalyticToyExample}):
\begin{equation}
	I_{x_{\mathrm{i}}}(a) = \max_{\gamma} \left[ a\gamma- \phi_{x_{\mathrm{i}}}(\gamma)  \right] = \left\{
    \begin{array}{ll}
		 	a\ln a -a +1 & \text{ if } x_{\mathrm{i}}=1,2 \\
			a \ln \f{a}{2} -a +2 & \text{ if } x_{\mathrm{i}}=3,4
    \end{array}
\right. \label{eq:LDFwithIC}
\end{equation}
The summation over initial and final conditions is now carried out using an asymptotic approximation, which can be written heuristically as
\begin{equation}
	e^{-T \Imicro(a)} \simeq \sum_{x_{\mathrm{i}}} p_{\mathrm{i}}(x_{\mathrm{i}}) e^{-T I_{x_{\mathrm{i}}}(a)} \simeq \exp \left( -T  \min_{x_{\mathrm{i}}} I_{x_{\mathrm{i}}}(a) \right)
\end{equation}
explaining why a minimum on the branches of Eq.~(\ref{eq:LDFwithIC}) appears in the final microcanonical LDF of Eq.~(\ref{eq:LDFAnalyticToyExample}). To summarize, from the Legendre-Fenchel conjugate of all the eigenvalues appearing in the biased operator of Eq.~(\ref{eq:biasedOpToyExemple}), we determined the partial LDFs. The minimum among these partial LDFs produces the microcanonical LDF. We illustrate this procedure in Fig.~\ref{fig:SimpleExample}. From this figure, we conclude on the ensemble equivalence for this model: it holds for $a \in [0,1[ \,\bigcup\, ]2,+\infty[ $, but the equivalence is partial at $a={1,2}$ and there is no equivalence for $a\in ]1,2[$.

Remark that when the mean activities of each subsystem are the same, the two branches of the LDF merge toward a convex shape. Therefore, in this case non-ergodicity is not sufficient to observe ensemble non-equivalence associated to non-convexity of the LDF.

\subsection{General framework}
\label{sec:generalSetup}

In the above example, we have seen that one can determine the asymptotic fluctuations of a physical observable by switching from one dynamical ensemble to another as long as the LDFs are piecewise-convex.
In this section, we develop this approach for the more general framework of section \ref{sec:PathEnsembles}. In practice, this amounts to express the inverse Laplace transform as a Legendre-Fenchel transform, using the saddle point method and taking care of the initial condition appropriately.

By definition, for $b \in \mathbb{R} $, the microcanonical LDF for $\mathcal{O}$ writes 
\begin{equation}
  \label{eq:OrderOperationNonConvex}
  \Imicro(\mathfrak{o}) \equiv \lim\limits_{T \to + \infty} \f{-1}{T} \ln  \int_{\bromwich-i\infty}^{\bromwich+i\infty} \dd \gamma \,  e^{-T\gamma \mathfrak{o}} \sum_{x_{\mathrm{f}},x_{\mathrm{i}}} G(x_{\mathrm{f}},x_{\mathrm{i}},\gamma) p_{\mathrm{i}}(x_{\mathrm{i}}),
\end{equation}
in term of the initial probability $p_{\mathrm{i}}(x_{\mathrm{i}})$ and of the propagator $ G(x_{\mathrm{f}},x_{\mathrm{i}},\gamma) = \langle e^{T\gamma \mathcal{O}} \rangle_{x_{\mathrm{f}},x_{\mathrm{i}}} $ that generates the moments of $\mathcal{O}$ under given initial and final conditions. The argument of the logarithm in Eq.~\eqref{eq:OrderOperationNonConvex} is exactly the probability distribution function of $\mathcal{O}$. Since solely the most probable events contribute to the LDF, we can focus on the initial conditions leading to the minimal value of the LDF (as seen in section \ref{sec:calc-non-conv}):
\begin{equation}
  \label{eq:OrderOperationNonConvex1}
  \Imicro(\mathfrak{o}) = \min\limits_{x_{\mathrm{f}},x_{\mathrm{i}}} \lim\limits_{T \to + \infty} \f{-1}{T} \ln  \int_{\bromwich-i\infty}^{\bromwich+i\infty} \dd \gamma \,  e^{-T\gamma \mathfrak{o}} G(x_{\mathrm{f}},x_{\mathrm{i}},\gamma) p_{\mathrm{i}}(x_{\mathrm{i}}).
\end{equation}
Finally, the complex integral for the inverse Laplace conjugate follows from the saddle point method:
\begin{equation}
  \label{eq:OrderOperationNonConvex2}
  \Imicro(\mathfrak{o}) = \min\limits_{x_{\mathrm{f}},x_{\mathrm{i}}} \max\limits_{\gamma} \left( \gamma \mathfrak{o} - \lim\limits_{T \to + \infty} \f{1}{T} \ln G(x_{\mathrm{f}},x_{\mathrm{i}},\gamma)p_{\mathrm{i}}(x_{\mathrm{i}}) \right).
\end{equation}

Alternatively, the convex hull of $\Imicro(\mathfrak{o})$ is the Legendre-Fenchel conjugate of the CGF for the observable $\mathcal{O}$, namely
\begin{equation}
  \label{eq:OrderOperationConvexHull}
  \Icano(\mathfrak{o}) \equiv  \max\limits_{\gamma} \left( \gamma \mathfrak{o} - \lim\limits_{T \to + \infty} \f{1}{T} \ln \sum_{x_{\mathrm{f}},x_{\mathrm{i}}} G(x_{\mathrm{f}},x_{\mathrm{i}},\gamma) p_{\mathrm{i}}(x_{\mathrm{i}})\right), 
\end{equation}
As seen in section \ref{sec:calc-non-conv}, the propagator $G$ may have an eigendecomposition for which the terms contributing to the CGF in the limit $T \to \infty$ depend on the initial or final condition. Hence, a maximum on $x_{\mathrm{f}}$ and $ x_{\mathrm{i}}$ must appear
\begin{equation}
	\Icano(\mathfrak{o}) =  \max\limits_{\gamma}  \left(  \gamma \mathfrak{o} -  \max\limits_{x_{\mathrm{f}},x_{\mathrm{i}}} \lim\limits_{T \to + \infty} \f{1}{T} \ln   G(x_{\mathrm{f}},x_{\mathrm{i}},\gamma)p_{\mathrm{i}}(x_{\mathrm{i}}) \right),
\end{equation}
to select the dominant asymptotic behavior in the limit $T\to\infty$. In view of comparing with Eq.~(\ref{eq:OrderOperationNonConvex2}), the maximization is modified into a minimization through the commutation with  the minus sign:
\begin{equation}
	\Icano(\mathfrak{o}) =  \max\limits_{\gamma}  \min\limits_{x_{\mathrm{f}},x_{\mathrm{i}}}\left(  \gamma \mathfrak{o} -  \lim\limits_{T \to + \infty} \f{1}{T} \ln   G(x_{\mathrm{f}},x_{\mathrm{i}},\gamma)p_{\mathrm{i}}(x_{\mathrm{i}}) \right).	
\end{equation}
In the end, the difference between the LDF $\Imicro(\mathfrak{o})$ and its convex hull $\Icano(\mathfrak{o})$ comes from the non commutation of $ \max_{\gamma}$ and $\min_{x_{\mathrm{f}},x_{\mathrm{i}}}$ as a consequence of the dependence on the initial conditions, i.e. of the non-ergodicity. Of course, if the LDF were convex, the ordering of these extremization would not matter.

The eigendecomposition of the propagator $G$ involves the eigenvalues of the biased matrix of Eq.~(\ref{eq:tiltedmatrix}). For non-ergodic systems or when the state space is infinite, the assumption of the Perron-Frobenius theorem does not hold and several eigenvalues may cross each other. To compute exactly the microcanonical LDF, one needs all the branches corresponding to each eigenvalue that becomes the highest eigenvalue for at least one $\gamma$. Only those branches matters, and
%From a more mathematical point of view, notice that the biased operator did not respect the conditions for the Perron-Frobenius theorem. The top eigenvalues that is used to compute the long time behavior of the generating function is not necessary unique, since the Perron-Frobenius theorem is no longer valid. We therefore have to consider a number of eigenvalues of the biased operator to compute the long time behavior.
other eigenvalues will not contribute, so as can be understood from the following argument: Be two eigenvalues $\phi_1(\gamma)$ and $\phi_2(\gamma)$ such that $\phi_1(\gamma)>\phi_2(\gamma)$ for all $\gamma $. Since $\gamma \mathfrak{o} -\phi_1(\gamma) < \gamma \mathfrak{o} - \phi_2(\gamma)$, we have
\begin{equation}
  \label{eq:elimiationpetiteEV}
  I_1(\mathfrak{o}) = \min\limits_{\gamma} \left\{ \gamma \mathfrak{o}-\phi_1(\gamma) \right\} < I_2(\mathfrak{o}) = \min\limits_{\gamma} \left\{ \gamma \mathfrak{o}-\phi_2(\gamma) \right\}.
\end{equation}
The last minimization in Eq.~(\ref{eq:OrderOperationNonConvex2}) on the partial LDFs will withdraw the contribution coming from the eigenvalue $\phi_{2}(\gamma)$ if it is smaller that $\phi_{1}(\gamma)$.

Physically speaking, we study a rare event in a system that has several independent subparts. We assume that the initial probability cannot be zero in all states of a subsytem, otherwise this subsystem shall be ignored. Each subpart of the system has its own probability to realize the rare event at stake. The subpart for which the event is the most likely will determine the event probability. This will be so if the rare event corresponds to a fluctuation of a time average quantity on a sufficiently long time so as to neglect the role of the initial state probability.

Mathematically speaking, when dealing with reducible biased operators whose highest eigenvalue is the CGF of interest, we must divide the operator into irreducible sub-operators for which holds the ensemble equivalence. For every sub-operators we proceed normally using the ensemble equivalence to determine the partial LDFs of the chosen observable from the Legendre-Fenchel transform of the highest eigenvalue of the sub-operator. The final LDF for the total system is then given by the minimum over all partial LDFs. This explains why the final LDF is piecewise convex.

%The global microcanonical LDF is dependent of the initial conditions. If the probability to be initially into one of the subsystem is zero, we cannot globally realize the event being into this subsytems and its partial LDF cannot be considered for the minimization over all partial LDF.

%%%%%%%%%%%%%%%%%%%%%%%%%%%%%%%%%%%%%%%%%%%%%%%%%%%%%%%%%%%%%%
\subsection{Mean-field Ising model}
\label{sec:mean-field-ising}
%%%%%%%%%%%%%%%%%%%%%%%%%%%%%%%%%%%%%%%%%%%%%%%%%%%%%%%%%%%%%%

Using the results of sec.~\ref{sec:generalSetup}, we study the activity and magnetization of an infinite range Ising model. This model is ergodic when considering a finite number of spins, but breaks ergodicity in the thermodynamics limit. In the following, we first introduce the model and its mean field (MF) treatment. Second, we provide the propagator of the generating function for magnetization and activity, and next use it to determine the CGF and both the canonical and microcanonical LDFs.
 
%%%%%%%%%%%%%%%%%%%%%%%%%%%%%%%%%%%%%%%%%%%%%%%%%%%%%%%%%%%%%%
\subsubsection{Model description and thermodynamics limit}
%%%%%%%%%%%%%%%%%%%%%%%%%%%%%%%%%%%%%%%%%%%%%%%%%%%%%%%%%%%%%%
\label{sec:modelDef}

We consider the fully connected Ising model made of $N$ interacting spins $\{s\}=(s_{1},\dots,s_{N})$. Each spin $s_{i}$ can hop between states $+1$ and $-1$ by exchanging  heat with a thermostat at inverse temperature $\beta$. The interaction energy between two spins is $V /N$ when the spins are not aligned and vanishes for parallel spins. The interaction energy is independent of the distance between spins. Beside the spin-spin interaction, each spin has a potential energy $-s_{i}E$ due to the presence of an external magnetic field. We introduce the free energy $F(n) \equiv U(n)-S^\IN(n)/\beta$ of the mesostate $n= \sum_{i=1}^N s_i$ in term of the total energy given (up to a constant) by
\begin{equation}
  \label{eq:energy}
  U(n) \equiv- \f{V}{N} \sum_{1\leq i\leq j \leq N}  s_is_j-E \sum_{1\leq i \leq N} s_i= -n^2 \f{V}{2N}-n E
\end{equation}
and of the internal entropy 
\begin{equation}
S^\IN(n) \equiv \ln N!/[\left(\f{N+n}{2}\right)!\left(\f{N-n}{2}\right)!].
\end{equation} 
We chose the transition rate from $n$ to $n+2\epsilon$ (with $\epsilon = \pm 1$) to be
\begin{equation}
  \label{eq:ratesfunction}
  K(n+2\epsilon,n) \equiv \Gamma \left ( N-\epsilon  n \right ) e^{ \f{\beta}{2} \left( (2\epsilon n +2)V/N + 2\epsilon E \right) }.
\end{equation} 
In the following we chose $\Gamma = 1$ and $\beta = 1$ to set the time and energy scales respectively. The system is in thermal equilibrium and the transition rates satisfy the detail balance equation
\begin{equation}
\ln \frac{K(n+2\epsilon,n)}{K(n, n+2\epsilon)} = -\beta (F({n+2\epsilon})-F({n})) \label{eq:DB},
\end{equation}
The probability of state $n$ at time $t$, denoted $p_n = p_n(t)$ evolves according to the master equation
\begin{equation}
 \dot p_n = \sum_{\epsilon=\pm 1} \left[  K(n,n+2\epsilon) p_{n+2\epsilon}  -  K(n+2\epsilon,n) p_{n} \right],\label{eq:MasterEquation}
\end{equation}
where $\dot p_n$ is the time derivative of $p_n$. 
The stationnary probability $\pstat(n)$ is the equilibrium probability
 \begin{equation}
   \label{eq:probastat}
   \pstat(n) \equiv \f{1}{Z\e{eq}} e^{-F(n)} \quad \text{with} \quad Z\e{eq} \equiv \sum_{n=0}^{N} e^{-F(n)}.
 \end{equation}

The time-averaged stochastic magnetization is obtained from Eq.~(\ref{eq:AT}) by chosing the state dependent function $f(n) = n/N$ and a vanishing function $g(n,n')=0$ for all $(n,n')$:
\begin{equation}
	M \equiv \frac{1}{NT} \int_{0}^{T} dt\, n(t),
\end{equation}
where $n(t)$ is the mesostate at time $t$.
We denote by $m$ some real value that can be achieved by the stochastic variable $M$. The mean magnetization in the stationary state writes
\begin{equation}
  \label{eq:mean-density}
  \langle M \rangle = \f{1}{N} \sum_{n=0}^{N} n \pstat (n).
\end{equation}
The activity is obtained from Eq.~(\ref{eq:AT}) by chosing $g(n,n')=1/N$ and $f(n) = 0$ for all $(n,n')$:
\begin{equation}
  \label{eq:AT}
A \equiv \f{1}{NT} \sum_{0\leq t\leq T:\Delta n_{t}\neq 0} 1
\end{equation}
The activity rate is thus a time-symmetric observable : the number of spin flips per unit time and per spin are identical in a trajectory and its time reversal. We denote by $a$ some real value that can be achieved by  the stochastic variable $A$. The mean activity in the stationary state writes
\begin{equation}
  \label{eq:mean-activity}
    \langle A \rangle = \f{1}{N} \sum_{n=0}^{N} \sum_{\epsilon=\pm 1} \pstat (n)  K_{n+2\epsilon,n} .
 \end{equation}

In the thermodynamic limit, when taking the continuous limit for the mesostate $x\equiv n/N \in [-1,1]$, the system energy changes by
\begin{equation}
\lim_{N \rightarrow \infty} \left[ U(n+2\epsilon)-U(n) \right] = 2\epsilon (V x+E)
\end{equation}
for a transition from $n$ to $n+2\epsilon$. Accordingly, the transition rates are in the same limit
\begin{equation}
  \label{eq:TransitionratesMF}
J_{\epsilon}(x) \equiv \lim\limits_{N\to+\infty} \f{K_{xN+2\epsilon,xN}}{N} = \left ( 1 -\epsilon x \right ) e^{ \epsilon \left[  V x+ E \right] }.
\end{equation}
In this case, the master equation Eq.~(\ref{eq:MasterEquation}) can be transformed into an evolution equation for $x$ in the mean field (MF) approximation
\begin{equation}
\langle \dot x \rangle =  \sum_{\epsilon =\pm 1} \epsilon J_{\epsilon}\left(\langle x \rangle \right).
\label{eq:MFequation}
\end{equation}
The steady state solution of this equation is the mean field magnetization $x = \obsm^\MF$ verifying
\begin{equation}
J_{-}\left (\obsm^\MF \right ) = J_{+}\left (\obsm^\MF\right). \label{eq:SteadyMFME}
\end{equation}
Using Eq.~(\ref{eq:TransitionratesMF}), the previous equation is equivalent to the transcendental equation
\begin{equation}
  \label{eq:MF-mag}
  \obsm^\MF = \tanh \left(V\obsm^\MF+E\right).
\end{equation}
The MF activity follows from the mean-field magnetization:
\begin{equation}
  \label{eq:MF-activity}
  a^\MF = J_{-}\left (\obsm^\MF \right )+ J_{+}\left (\obsm^\MF\right)= \left[1- \obsm^\MF\right]e^{ V\obsm^\MF+E}+\left[1+ \obsm^\MF\right]e^{ -V\obsm^\MF-E}.
\end{equation}
The MF magnetization and activity are shown in the bifurcation diagram of Fig.\ref{fig:MF}. At a critical value of the interaction energy, three MF magnetizations appear instead of a unique one, due to the well studied ferromagnetic transition in the Ising model. This bifurcation also affects the system activity as shown on the Fig.\ref{fig:MF}b and as expected from  Eq.~(\ref{eq:MF-activity}) since $a^\MF$ is a function of $\obsm^\MF$. We remark that for $E=0$ the system activity is an even function of the magnetization and hence the bifurcation diagram for activity has two branches only. We also notice that the MF activity is higher for the branch that does not break the system symmetry.
%%%%%%%%%%%%%%%%%%%
\begin{figure}[h]
\centering
\includegraphics[width=\columnwidth]{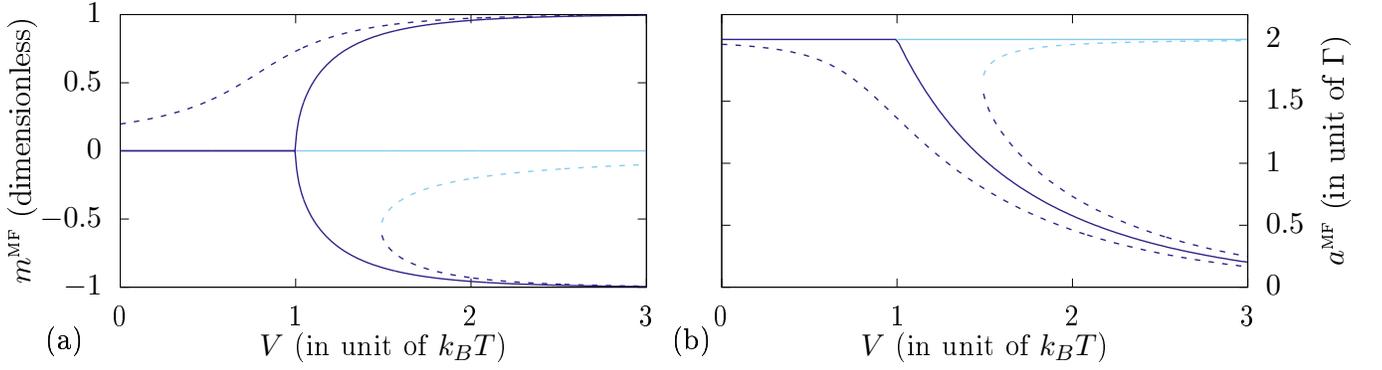}
\caption[Mean-field]{\label{fig:MF} 
(a) Stable (dark blue) and unstable (light blue) mean field steady state densities $m^\MF$ versus interaction energy $V$. 
(b)  Stable (dark blue) and unstable (light blue) mean field steady state activity $a^\MF$ versus interaction energy $V$. $E=0$ for solid lines and $E=0.2$ for dashed lines.
}
\end{figure}
%%%%%%%%%%%%%%%%%%%

%%%%%%%%%%%%%%%%%%%%%%%%%%%%%%%%%%%%%%%%%%%%%%%%%%%%%
\subsubsection{Propagator of the generating function for magnetization and activity}
%%%%%%%%%%%%%%%%%%%%%%%%%%%%%%%%%%%%%%%%%%%%%%%%%%%%%

Like in the four state model of section~\ref{sec:calc-non-conv}, we look for the propagator of the generating function for the activity and magnetization:
\begin{equation}
  \label{eq:propagatorDef}
 G(x_{\mathrm{f}},x_{\mathrm{i}},\kappa,\gamma) = \langle e^{N T (\kappa \obsmag+\gamma a)} \rangle_{x_{\mathrm{f}},x_{\mathrm{i}}},
\end{equation}
where as before the subscripts indicate a conditioning on the initial and final states, respectively $x_{\mathrm{i}} = n_{\mathrm{i}}/N$ and $x_{\mathrm{f}}= n_{\mathrm{f}}/N$. From a path integral approach, see Appendices~\ref{app:derivationPropagator} and~\ref{app:WKBapprox}, an asymptotic expression of the propagator reads
\begin{equation}
  \label{eq:propagatorGFclass}
G(x_{\mathrm{f}},x_{\mathrm{i}},\kappa,\gamma) \simeq_{N\to +\infty}  \exp \left( N \int_{0}^{T} \dd  t' \mathcal{L}(x_{t'},\dot{x}_{t'},\kappa,\gamma) \right ),
\end{equation}
where $\mathcal{L}$ is the Lagrangian
\begin{equation}\label{eq:Lagrangien}
\mathcal{L}(x,\dot{x},\kappa,\gamma)=\f{\dot{x}}{2}\ln\left( \f{- \dot x + \sqrt{\dot{x}^2+\varphi(x,\gamma)}}{2 J_-(x) e^{\gamma}} \right)-\sum_{\epsilon=\pm1} J_\epsilon(x)+\sqrt{\dot{x}^2+\varphi(x,\gamma)}+\kappa x,
\end{equation}
with
\begin{equation}
  \label{eq:varphi}
\varphi(x,\gamma)=4\prod\limits_{\epsilon=\pm1}J_\epsilon(x)e^{\gamma} = 4(1+x)(1-x)e^{2\gamma }.
\end{equation}
The propagator of Eq.~(\ref{eq:propagatorGFclass}) is almost explicit: the path $[x]$ starting in $x_{\mathrm{i}}$ and ending in $x_{\mathrm{f}}$ must be determined using the Euler-Lagrange equation
\begin{equation}
  \label{eq:Euler-Lagrange}
\Dp{\mathcal{L}}{x}=\ddf{}{t}\left( \Dp{\mathcal{L}}{\dot x}\right).
\end{equation}
The propagator of the generating function is so determined by the initial and final conditions.

As explained in sec.~\ref{sec:generalSetup}, we now vary initial and final conditions to obtain the dominant contributions to the generating function. As the large size and long time limit of the propagator is evaluated with large size limit first and then long time limit, the system is non-ergodic. Due to this non-ergodicity, the dominant contributions are obtained from stationary trajectories, \textit{i.e.} element of the propagator $G(\xetoile,\xetoile,\kappa,\gamma)$ such that 
\begin{equation}
\Dp{\mathcal{L}}{x} (\xetoile,0,\kappa,\gamma)=0.\label{eq:stat_condition}
\end{equation}
and we denote $\xetoile(\kappa,\gamma)$ the various solutions of this equation. These constant trajectories are the only ones that will dominate for at least one value of $(\kappa,\gamma)$. Following sec.~\ref{sec:generalSetup}, we can restrain the extremization over initial and final conditions to these stationary trajectories.

We provide an heuristic argument for the choice of stationary trajectories: Dominant trajectories correspond to long-time behavior of the tilted system. We expect the tilted system to be in a stationary state at long-time, as we study an equilibrium system. Neglecting the boundary terms due to long time limit, we restrain to trajectories starting and ending into a stationary state. The system being non-ergodic, the state space is divided into various subpart each associated to different $\xetoile$ and we forbid trajectories that start and end into different subpart of the state space. This approach is confirmed by numerical computation of the CGF in the next section.

\subsubsection{CGF of magnetization and activity}

We now use the propagator of Eq.~(\ref{eq:propagatorGFclass}) to derive the CGF of magnetization and activity. This CGF proceeds from the leading elements of the propagator of the generating function
\begin{eqnarray}
  \label{eq:CGFfrompropagatorGF}
  \phi(\kappa,\gamma) &=& \lim\limits_{N,T \to \infty} \f{1}{NT} \ln \langle e^{NT \kappa \obsmag + NT\gamma a} \rangle, \nonumber\\
                      &=& \max\limits_{x_{\mathrm{f}},x_{\mathrm{i}}} \lim\limits_{N,T \to \infty} \f{1}{NT} \ln  G(x_{\mathrm{f}},x_{\mathrm{i}},\kappa,\gamma) p_{\mathrm{i}}(x_{\mathrm{i}}).
\end{eqnarray}
As explained above, we can focus on stationary trajectories that solve Eq.~\eqref{eq:stat_condition}.
Then, solving for $x$ amounts to find the extrema of $\mathcal{L}(x,0,\kappa,\gamma)$.
Assuming that $p_{\mathrm{i}}(x)>0$ for all $x$, one can safely drop the initial probability $p_{\mathrm{i}}$ in Eq.~(\ref{eq:CGFfrompropagatorGF}) ending with
\begin{eqnarray}\label{eq:CGFcanonique}
\phi(\kappa,\gamma) &=& \max_{x} \mathcal{L}(x,0,\kappa,\gamma), \\ &=&   \sqrt{\varphi(\xetoile(\kappa,\gamma),\gamma) } 
-\sum_{\epsilon=\pm1,\nu=1,2} J_\epsilon^\nu(\xetoile(\kappa,\gamma))+\kappa \xetoile(\kappa,\gamma).
\end{eqnarray}
\begin{figure*}[h]
\includegraphics[width=\columnwidth]{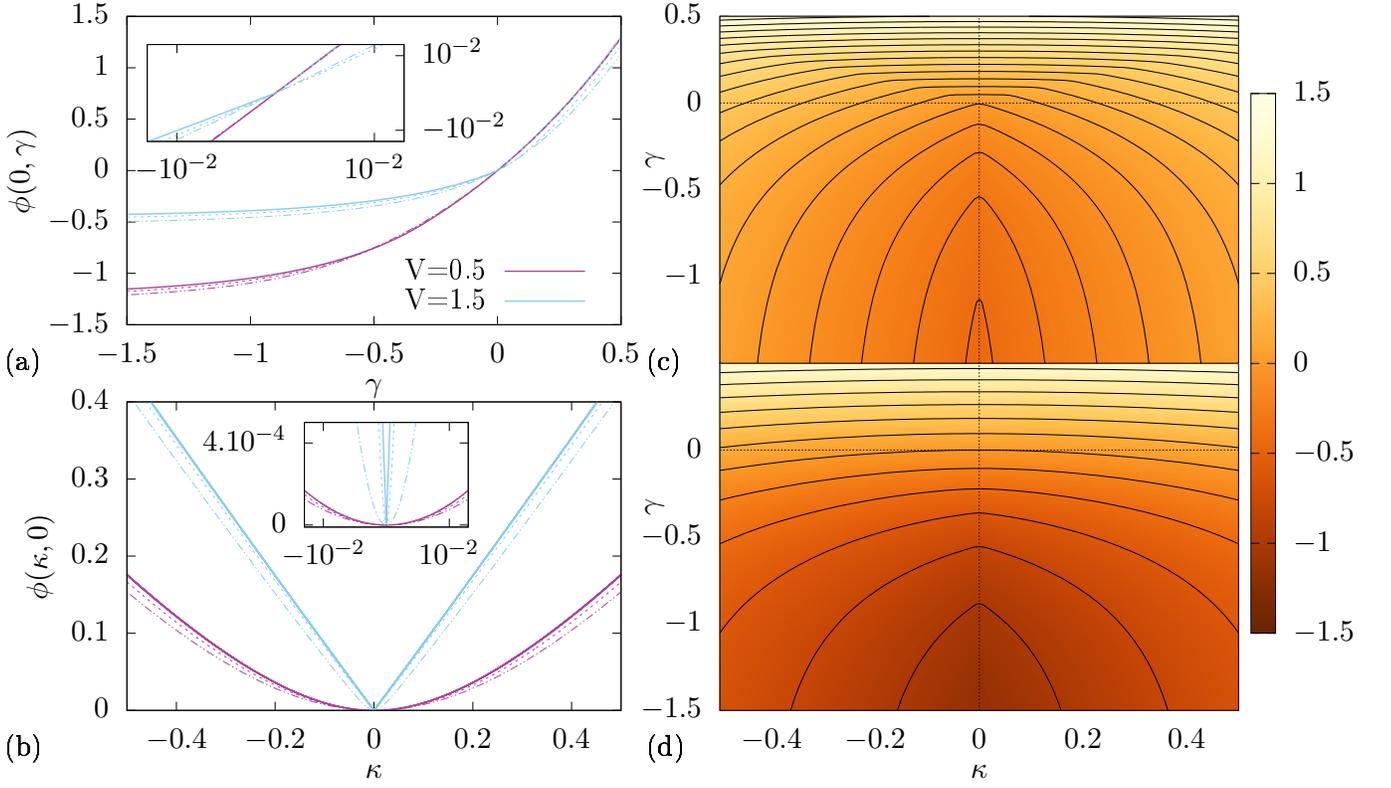}
\caption[CGF]{
Cross-sections of the CGF $\phi(\kappa,\gamma)$ along (a) $\kappa=0$ and (b) $\gamma=0$ (solid lines), and corresponding CGFs for the finite size system with $N=10$ (dot-dashed lines) and $N=25$ particles (dashed lines). %Parameters are $V=0.5$ (magenta curves) and $V=1.5$ (light blue curves). 
Inserts: Zoom on the non-differentiable point of the light blue line for which $V=1.5$. 
(c) CGF $\phi(\kappa,\gamma)$ and level lines for $V=1.5$ (d) CGF $\phi(\kappa,\gamma)$ and level lines for $V=0.5$. For all figures E= 0. %$N=10$ dot-dashed lines and $N=25$ dashed lines.
 \label{fig:CGF}
}
\end{figure*}
Unfortunately, the determination of $\xetoile(\kappa,\gamma)$ involves a transcendental equation. We solved this equation numerically to provide the CGFs before and after the bifurcation in Fig~\ref{fig:CGF}d and \ref{fig:CGF}c respectively. Cross-sections of the CGF in the plane $\kappa=0$ and $\gamma=0$ are shown in Fig~\ref{fig:CGF}a-b. After the bifurcation for $V=1.5$, the CGF is clearly not differentiable. The left and right partial derivatives at $(\kappa,\gamma)=(0,0)$ lead to different mean magnetization and activity in agreement with the bifurcation diagram of Fig~\ref{fig:MF} of the mean-field framework. We notice on Fig~\ref{fig:CGF}d that before the transition the CGF has a non differentiability not located at the origin of the $(\gamma,\kappa)$ plane. Hence the mean magnetization and activity are unique, but their fluctuations are impacted by the phase transition. We confirm our results by computing numerically the CGF as the highest eigenvalue of the biased matrix 
\begin{equation}
  \label{eq:tiltedmatrix}
  \bm{K}_{\gamma,\kappa}(x,y)= \left\{\begin{array}{ll} k(x,y)e^{\gamma/N}  & \text{ if } x \neq y \\k(x,x)+\kappa x/N & \text{ if } x=y \end{array} \right. ,
\end{equation}
for systems with $N=10$ and $N=25$ spins.
As explained in Sec~\ref{sec:origin-non-convexity}, the CGFs of finite size systems are everywhere differentiable. Then, they only approach gradually the non-differentiable CGF when $N\to \infty$.

%%%%%%%%%%%%%%%%%%%%%%%%%%%%%%%%%%%%%%%%%%%%%%%%%%%%%%%%%%%
\subsubsection{Canonical LDF of magnetization and activity}
\label{sec:conv-ldf-magn}
%%%%%%%%%%%%%%%%%%%%%%%%%%%%%%%%%%%%%%%%%%%%%%%%%%%%%%%%%%%

The canonical LDF $\Icano (\obsmag,a)$ shown in Fig.~\ref{fig:ConvexLDF}c-d is the Legendre-Fenchel conjugate of the CGF
\begin{equation}\label{eq:LegendreTransform}
\Icano(\obsmag,a) = \max\limits_{\kappa,\gamma} \left\{ \kappa \obsmag+ \gamma a -\phi(\kappa,\gamma) \right\}.
\end{equation}
At low interaction energy $V$, we observe a smooth function whose unique minimum is given by the mean-field solution of Eqs.~(\ref{eq:MF-mag}-\ref{eq:MF-activity}). However at higher interaction energy, a plateau appears in the LDF between the MF solutions. This plateau, in association with the non-differentiability of the CGF, indicates a phase transition. As emphasized before, we always have ergodicity within the canonical ensemble, and this plateau is the result of a ``temporal coexistence'' of MF states: the system spends most of its time into the various MF states leading to a time averaged magnetization and activity belonging to the convex area defined by the MF solutions, here a triangle. 

The contracted LDF for activity $\Icano(a)$ and magnetization $\Icano(\obsmag)$ defined by
\begin{eqnarray}
  \label{eq:contractIconvex}
  \Icano(\obsmag) \equiv \min\limits_{a} \Icano(\obsmag,a) =   \max\limits_{\kappa} \left\{ \kappa \obsmag -\phi(\kappa,0) \right\}\\
   \Icano(a) \equiv \min\limits_{\obsmag} \Icano(\obsmag,a) = \max\limits_{\gamma} \left\{ \gamma a -\phi(0,\gamma) \right\},
\end{eqnarray}
are shown in Fig.~\ref{fig:ConvexLDF}a-b, together with the LDFs for the finite size systems obtained from the Legendre transform of their corresponding CGFs in Fig.~\ref{fig:CGF}. For $V=1.5$, the latter LDFs converge towards the plateau with a speed that is lower for the activity LDF than for the magnetization LDF in agreement with the fact that the plateau for the LDF of activity lies between a stable MF solution and an unstable MF solution, while the plateau for the magnetization lies between two stable solutions.
\begin{figure*}[h]
\includegraphics[width=\columnwidth]{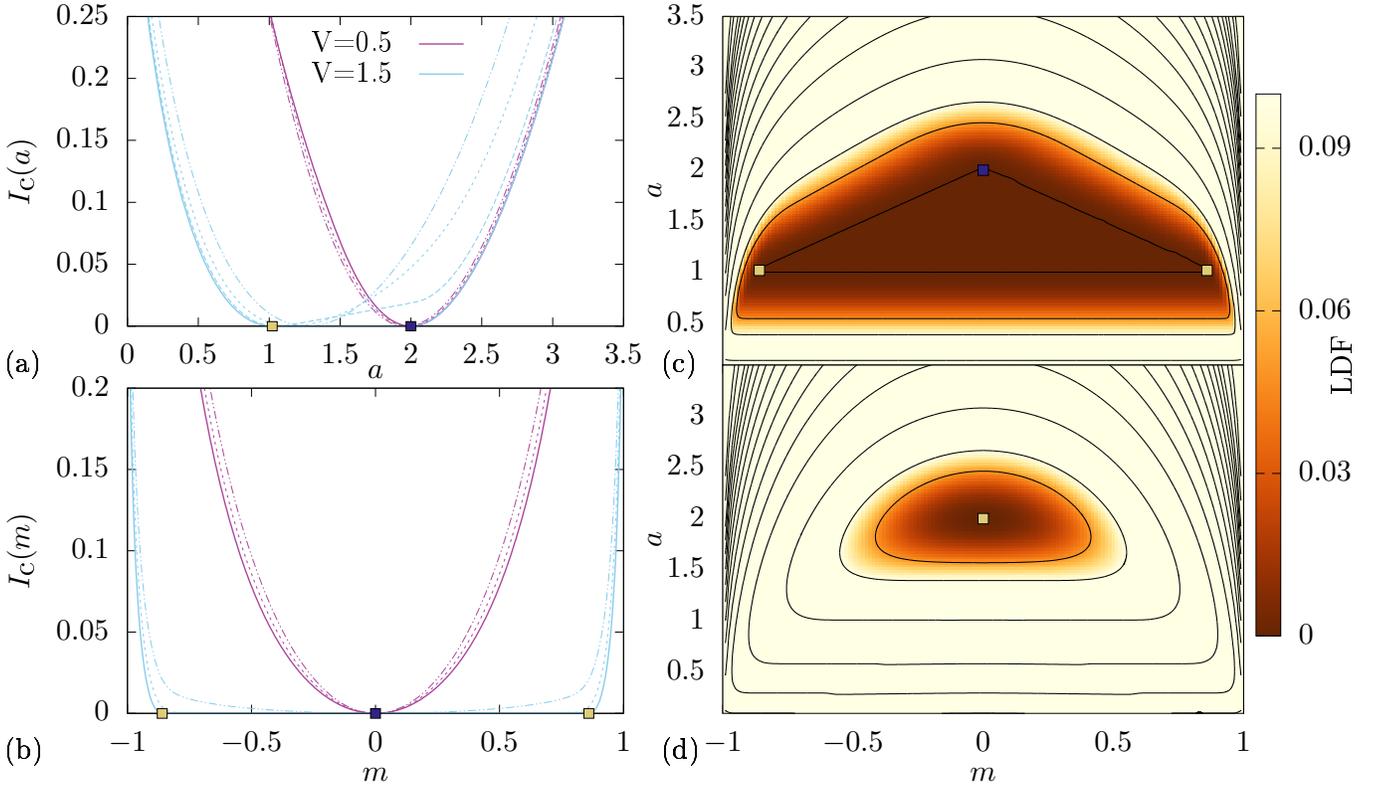}
\caption[Convex LDF]{
(a) Canonical LDF $ \Icano(a)$ (solid lines) and finite size LDF for $N=10$ (dot-dashed lines), $N=25$ (dashed lines) and N=$100$  (long-dashed lines). 
(b) Canonical LDF $\Icano(\obsm)$ (solid lines) and finite size LDF for $N=10$ (dot-dashed lines) and  $N=25$ (dashed lines). Parameters are for (a-b): $V=0.5$ (magenta lines) and $V=1.5$ (light blue lines). 
(c) and (d) Canonical LDF $\Icano(\obsm,a)$ of activity and magnetization and level lines (c) for $V=1.5$  
and (d) for $V=0.5$. Beige squares indicate the location of the stable solutions of Eqs.~(\ref{eq:MF-mag}-\ref{eq:MF-activity}) whereas dark-blue squares are for unstable solutions. \label{fig:ConvexLDF} For all figures $E=0$. For figure (c-d), the color map replaces the level lines for low values of the LDFs.
}
\end{figure*}

%%%%%%%%%%%%%%%%%%%%%%%%%%%%%%%%%%%%%%%%%%%%%%%%%%%%%%%%%%%
\subsubsection{Microcanonical LDF of magnetization and activity}
%%%%%%%%%%%%%%%%%%%%%%%%%%%%%%%%%%%%%%%%%%%%%%%%%%%%%%%%%%%

\label{sec:non-convexity-ldf}
As explain in sec.~\ref{sec:generalSetup}, the microcanonical LDF follows from the Legendre-Fenchel transform of the elements of the propagator of the generating function. Therefore Eq.~(\ref{eq:OrderOperationNonConvex2}) yields
\begin{equation}
  \label{eq:LDFMicrofromPGF}
  \Imicro(\obsmag,a) =\min\limits_{x_{\mathrm{f}},x_{\mathrm{i}}} \max\limits_{\kappa,\gamma}\left[ \kappa \obsmag + \gamma a - \lim\limits_{N,T \to \infty} \f{1}{NT} \ln  G(x_{\mathrm{f}},x_{\mathrm{i}},\kappa,\gamma) p_{\mathrm{i}}(x_{\mathrm{i}})\right]
\end{equation}
Each stationary solution of Eq.~\eqref{eq:stat_condition} denoted $\xetoile $ defines a partial LDF
\begin{equation}\label{eq:brancheI}
  I_{\xetoile}(\obsmag,a)=\max\limits_{\kappa,\gamma} \left\{\kappa \obsmag +\gamma a -  \mathcal{L}(\xetoile(\kappa,\gamma),0,\kappa,\gamma) \right\}.
\end{equation}
If the initial magnetization is in the same subpart of the state space than $\xetoile$, the system's fluctuations are best described by $I_{\xetoile}(\obsmag,a)$. When ensemble averaging on the initial condition, we look for the minimum on the stationary trajectories to obtain the microcanonical LDF
\begin{equation}
\Imicro(\obsmag,a) = \min\limits_{x \in \{ \xetoile\} } I_{x}(\obsmag,a).
\end{equation}
From the fact that $\Icano$ is the convex hull of $\Imicro$, we get the following inequality between the two LDFs:
\begin{equation}\label{eq:ComparaisonLDFMicroCano}
\Imicro(\obsmag,a) \geq \Icano(\obsmag,a).
\end{equation}
The microcanonical LDF $\Imicro(\obsm,a)$ is shown after the bifurcation on Fig.~\ref{fig:NonConvexLDF}c, We also provide in Fig.~\ref{fig:NonConvexLDF}a-b the partial LDF for activity and magnetization (after contraction) and their convex hulls. As expected, the microcanonical LDF is not convex: the ensemble equivalence does not hold (in a specific interval of magnetization and activity) for our model in the thermodynamics limit, due to lack of ergodicity.

Comparing the canonical LDF obtained from Eqs.~(\ref{eq:contractIconvex}) and the microcanonical LDF, we notice that the former is as expected the convex hull of the latter. 
\begin{figure*}[h]
\includegraphics[width=\columnwidth]{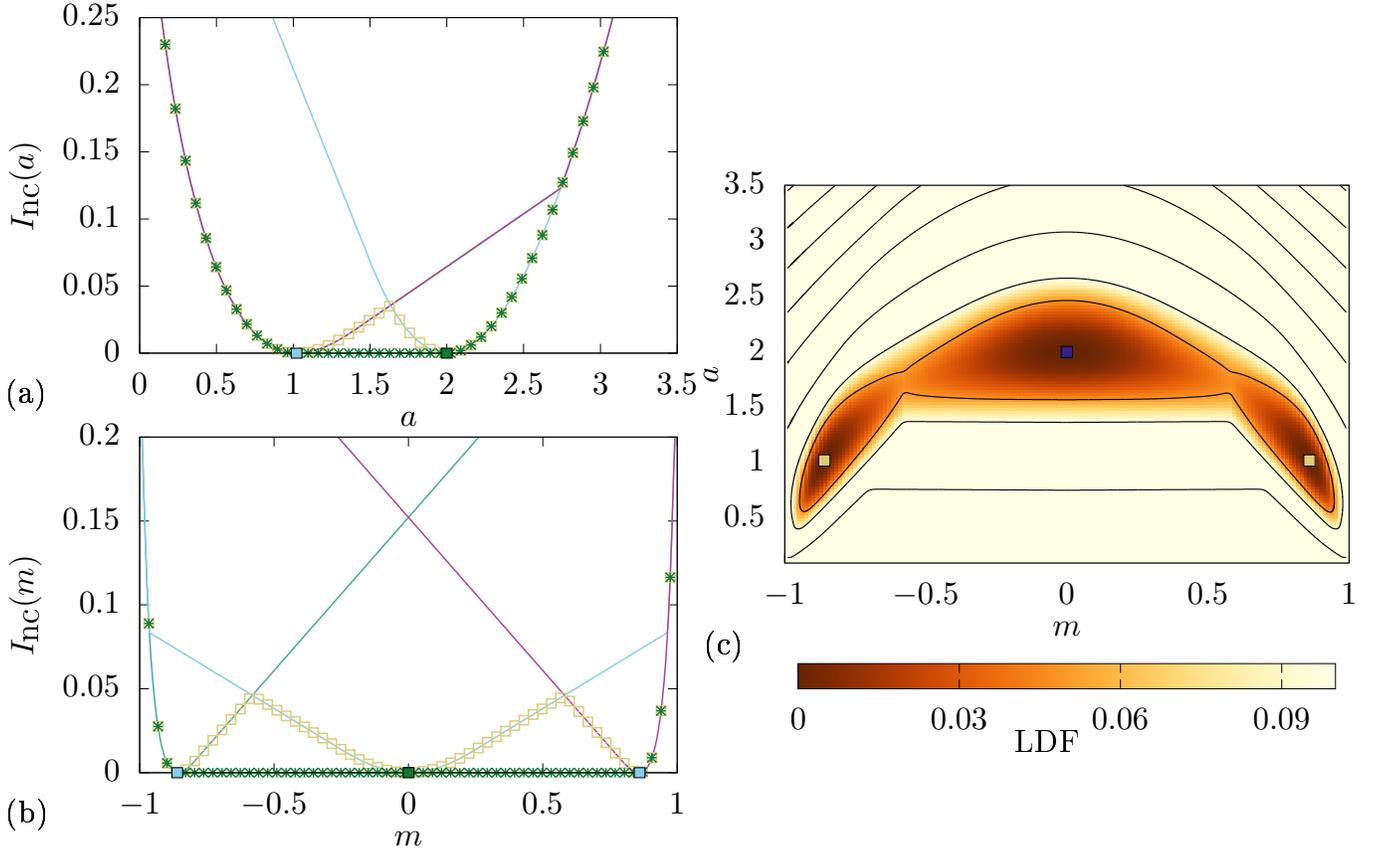}
\caption[Microcanonical LDF]{Branches of the microcanonical LDF of (a) activity and (b) magnetization (solid lines). Lines with symbols represent the canonical LDF (crosses) and the microcanonical LDF (squares). (c) Level lines of the microcanonical LDF of activity and magnetization. For all figures, parameters are: $E=0$ and $V=1.5$. For figure (c), the color map replaces the level lines for low values of the LDF. \label{fig:NonConvexLDF} }
\end{figure*}

%%%%%%%%%%%%%%%%%%%%%%%%%%%%%%%%%%%%%%%%%%%%%%%%%%%%%%%%%%%%%%
\section{Non convex LDF and divergent mixing time}
%%%%%%%%%%%%%%%%%%%%%%%%%%%%%%%%%%%%%%%%%%%%%%%%%%%%%%%%%%%%%%
\label{sec:LimitOrder}

In the previous sections, we have obtained the LDF of activity and magnetization for the canonical and microcanonical ensembles. In the thermodynamic limit, the ensemble equivalence is broken at high interaction even though it holds at finite size. We now explore the transition from equivalence to non-equivalence when increasing the size of the system, putting the emphasis on the order of the large size and long time limits in the computation of the statistics of magnetization and activity.

In this section, we point out the existence of a mixing time $t\e{mix}$ that depends on the system size. For systems of finite size, the mixing time governs the fluctuation regime. 
First, we define the mixing time from the spectral gap of the finite size operator $\bm{K}$. Second, we give an estimate of the mixing time for the Ising model and prove that it diverges when $N \rightarrow \infty$. This means that it is impossible to fully relax from the initial condition when the system includes too many spins, leading to an ergodicity breaking. 
Finally, we explore the different regime of fluctuations at finite size and time with numerical simulations, enlightening the coherence of our previous results.

For the numerical simulations of this section, the magnetic field is non zero ($E=0.2$) in order to break the up-down symmetry of the Ising model. Thanks to the magnetic field there exists two different stable magnetizations associated to different stable activities. Then, the activity probability density function will be bimodal as it is beyond numerical reach (at large $N$) to detect the third unstable magnetization and its associated activity. Without the magnetic field, just the unique stable activity would appear on the numerical simulations and the activity probability density function would be unimodal. For magnetization, we would observe only bimodal probability density function for symmetry reasons.

\subsection{Gap at finite size}
\label{sec:gap-at-finite}
We come back to the continuous-time process of Eq.~(\ref{eq:EqMaitress}) of Markov operator $\bm{K}$. We assume that $\bm{K}$ is diagonalizable. The eigenvalues are $\{ \lambda_i\}_{i=1,N}$, where $\lambda_1=0$ and all other eigenvalues are negative. Its set of eigenvectors is $\{l_i, \, r_i\}_{i=1,N}$ where $l_1$ is an uniform vector and $r_1=\pi$ is the stationnary probability. The spectral gap $\Delta \lambda$ of the operator $k$ is the difference between the largest eigenvalue $\lambda_1$ and the real part of its second eigenvalues $\lambda_2$.

Considering the initial probablity $p_0$, the probability
\begin{equation}
  \label{eq:formalSolutionMaitresseEq}
  p(x,t) = \left( e^{tk}p_{\mathrm{i}}\right)(x) =  \pi(x) + \sum_{i} e^{t \lambda_i} r_i(x) \sum_x l_i(x) p_{\mathrm{i}}(x)
\end{equation}
is a formal solution to Eq.~(\ref{eq:EqMaitress}).
Since the spectral gap $\Delta \lambda$ of the operator $\bm{K}$ is positive, the probability $p(x,t)$ converges towards $\pi$. The mixing time $t\e{mix} (\varepsilon)$ is used to quantify the time that the system takes to relax to the stationary probability \cite{Levin2008}. By definition, the mixing time  $t\e{mix} (\varepsilon)$ is the minimal time for which starting from any initial probability the system is at most at a distance  $\varepsilon$ of the stationary probability. Formally, the mixing time is
\begin{equation}
  \label{eq:mixingtimeDef}
  t\e{mix} (\varepsilon) \equiv \inf\left\{ t \geqslant 0\, :\,  \max\limits_{p_{\mathrm{i}} } \| e^{t\bm{K}}p_{\mathrm{i}} - \pstat \|_{TV} \leqslant \varepsilon  \right\},
\end{equation}
where the distance is the total variation distance defined as $\| u\|_{TV} = \sup_{A } \, u(A)  $. A common choice for $\varepsilon$ is $e^{-1}$, such that $t\e{mix} \equiv t\e{mix}(e^{-1})$. The mixing time quantifies the time needed to reach stationary probability, whatever is the initial probability. In particular, an infinite mixing time is a feature of non-ergodic systems. When considering large deviations statistics, we must formally take long time limit. We emphasize that for times longer than the mixing time, the large deviation statistics is approximately valid.

The mixing time of Eq.~(\ref{eq:mixingtimeDef}) is a maximization over all initial probability, therefore looking at $  \| e^{t\bm{K}}p_{\mathrm{i}} - \pstat \|_{TV} $ for an uniform initial probability $p_{\mathrm{i}}=1$ underestimates the mixing time. We plot on the Fig.~\ref{fig:MixingTime}c, the evolution of this distance for various sizes.
First, we observe an exponential scaling of the total variation distance with time. When comparing with Eq.~(\ref{eq:approx_proba_mixt}), we expect the mixing time to be connected with the spectral gap. Indeed we have for any Markov process~\cite{Levin2008}
\begin{equation}
  \label{eq:lienGapmixingtime}
  \f{1}{\Delta \lambda} \leqslant t\e{mix} \leqslant -\f{\log{p_{\mathrm{i}}^{\mathrm{min}}}}{\Delta \lambda}
\end{equation}
where $p_{\mathrm{i}}^{\mathrm{min}} = \min_{x} p_{\mathrm{i}}(x)$ and $\Delta \lambda$ is the spectral gap.
Second, at very short time, the evolution of total variation distance has a different scaling. This short time behavior is due to the other eigenvalues whose influence on the probability $p(x,t)$ disappears quickly. 
Finally, the evolution of the total variation distance with system size reveals the strong dependencies of the mixing time on the system size allowing for longer and longer transient behavior. 

At finite time, the fluctuations critically differ if $ T \gg t\e{mix}$ or if $ T \ll t\e{mix}$. On Fig.~\ref{fig:MixingTime}a and \ref{fig:MixingTime}b we show the empirical density probability for activity and magnetization for a system of $N=60$ spins and for different durations $T$, with $t\e{mix} \simeq 100$. And on Fig.~\ref{fig:MixingTime}d, we plot the empirical probability density of activity for a system size $N=200$ where $t\e{mix}\simeq 10^5$.

For $T \gg t\e{mix}$, the long time probabilities of magnetization and activity are those of an ergodic systems (the system has enough time to switch between MF states). The asymptotic probability is then given by the convex LDF of sec.~\ref{sec:conv-ldf-magn}. The convergence toward the plateau for these LDFs is discussed in the next section. Notice that 
the added small magnetic field is responsible of the unimodality of the probability density functions, see Fig.~\ref{fig:MixingTime}a and \ref{fig:MixingTime}b at $T=300$.

Otherwise when the second eigenvalue is well separated from the others, \textit{i.e.} when $\Delta \lambda \ll | \lambda_0-\lambda_3 |$, the probability $p(x,t)$ is well approximated for $| \lambda_0-\lambda_3 |^{-1} \ll T \ll t\e{mix}$ by
\begin{equation}
p(x,t) \simeq \pstat(x) +  e^{-t \Delta \lambda}r_1(x) \sum_y l_1(y) p_{\mathrm{i}}(y). \label{eq:approx_proba_mixt}
\end{equation}
It contains a term coming from the second eigenvalue. This second term induces the secondary peak on the probability density function. Therefore, for intermediate times before the mixing time, the system behaves as an effective non-ergodic system with fluctuations around each MF states (the system lacks of time to switch between states). We have a transient behavior where the probabilities of magnetization and activity are then bimodal, see Fig.~\ref{fig:MixingTime}d, and are linked with the microcanonical LDF of sec.~\ref{sec:non-convexity-ldf}.

\begin{figure*}[t]
\includegraphics[width=\columnwidth]{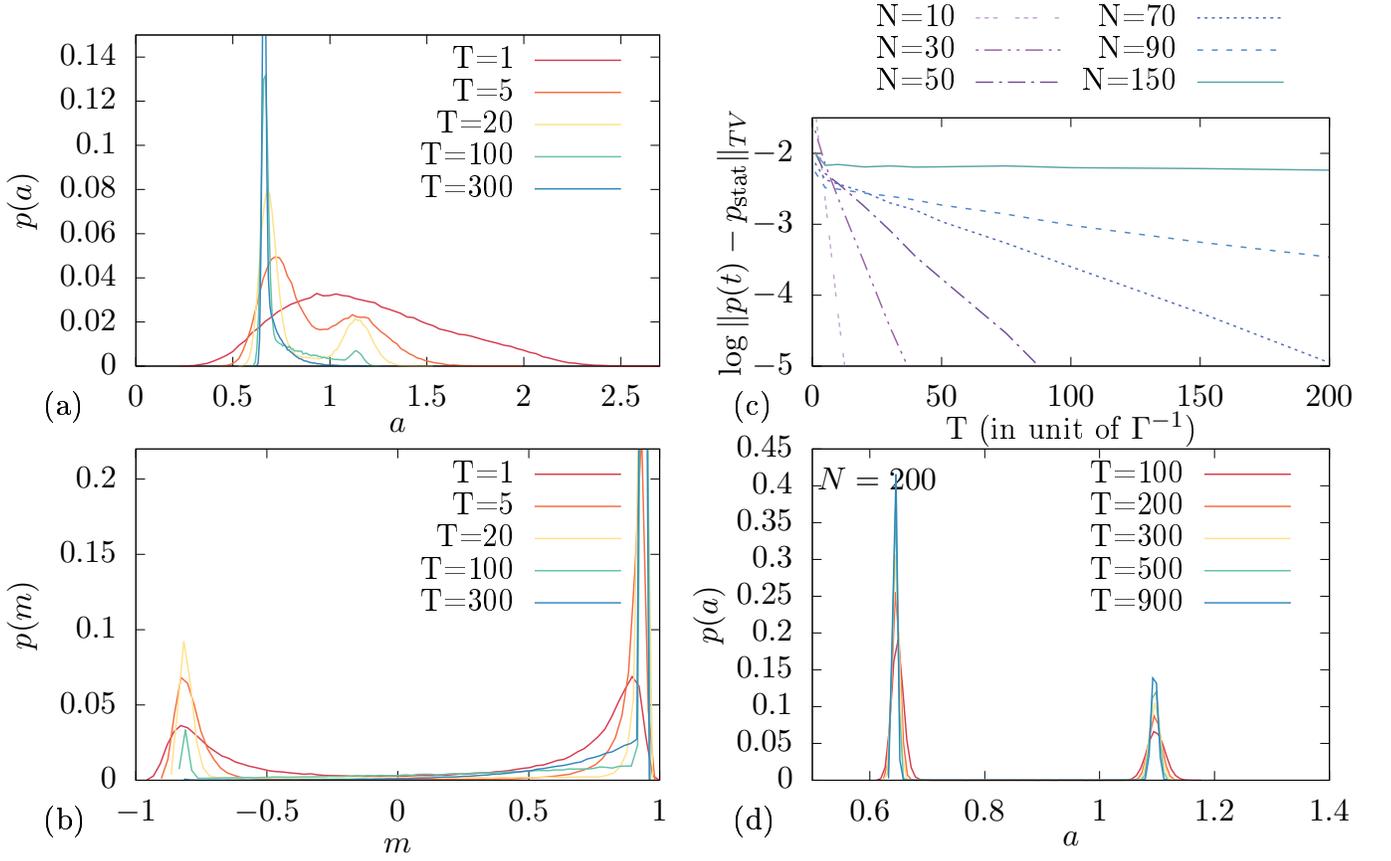}
\caption[Mixing time]{(a) Activity and (b) magnetization probability density functions obtained from numerical simulations of various duration $T$ for $N=60$. (c) Logarithm of the total variation distance between $   e^{t\bm{K}}p_{\mathrm{i}} $ and $\pstat$ in function of the duration of the evolution for various system size $N$. We simulate the evolution of $N$ spins using the Gillespie algorithm. For each size and time, a total of $15.10 ^{4}$ trajectories are drawn. The initial condition of the trajectory is draw from uniform distribution.
The probability density functions are computed from a histogram of $75$ bins between the minimum and  maximum values.
The total variation distance is the maximum of the difference between an histogram of the final value of the trajectory and an histogram of $15.10 ^{4}$ points drawn from probability of Eq.~(\ref{eq:probastat}).
% Given the error on the histogramm, total variation distance less than $ 7.10^{-3} \simeq e^{-5}$ are considered as non significative and are not plot.
(d) Activity probability density functions obtained from numerical simulations of various duration $T$ for $N=200$. The parameters are: $V=1.7$, $E=0.2$.   \label{fig:MixingTime}
}
\end{figure*}

\subsection{Estimation of the gap for systems with detailed balance}
\label{sec:estimation-gap-tree}

\begin{figure}[ht]
  \centering
  \includegraphics[width=\columnwidth]{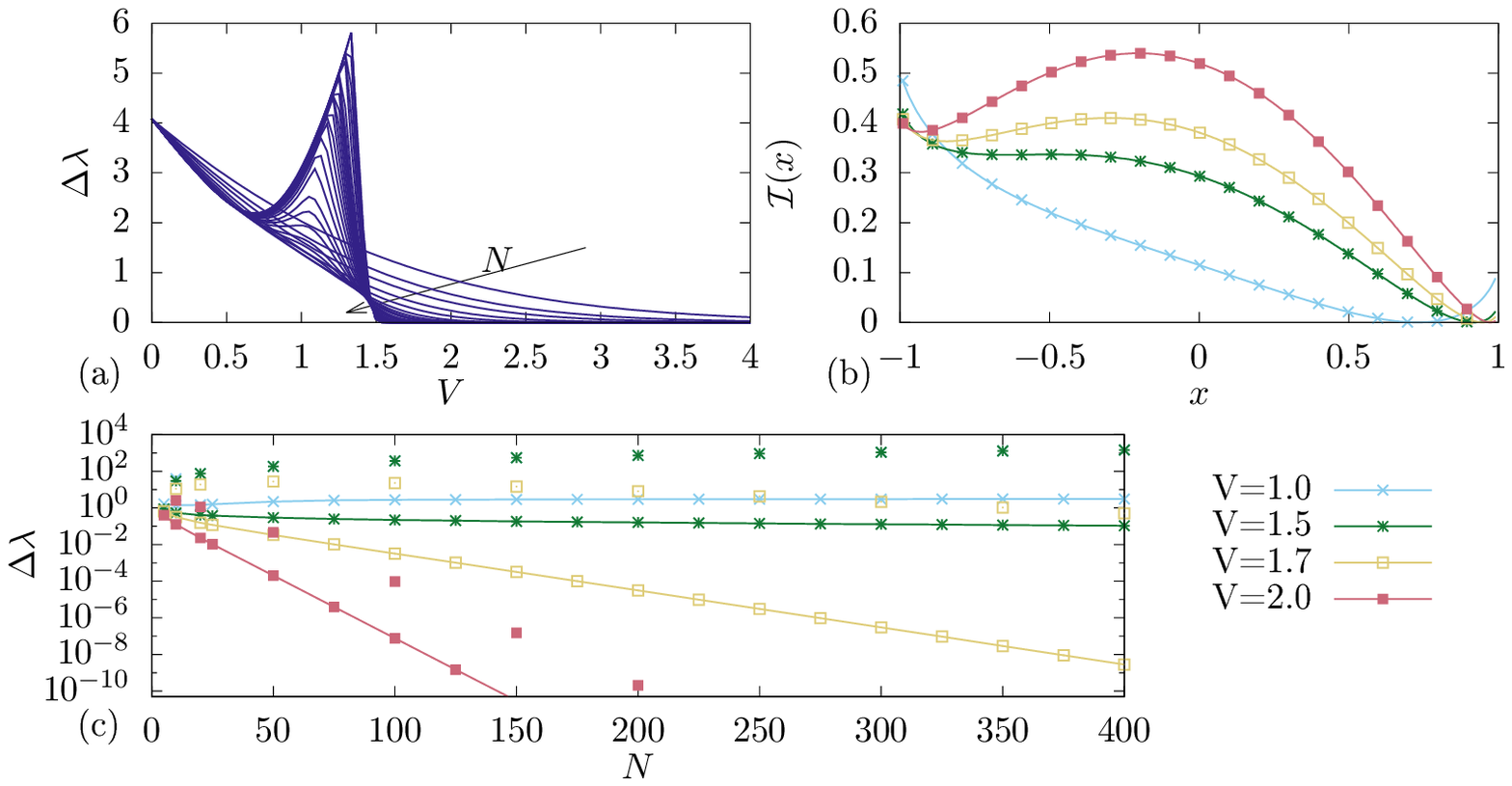}
  \caption[Gap]{(a) Spectral gap as a function of the interaction energy $V$ for various system size. (b) Large size LDF of $x$ defined by $\mathcal{I}(x) = \lim_{N \to \infty} \frac{1}{N} \ln \pi(x)$  for various interaction energy $V$. (c) Spectral Gap (lines with symbols) and bound of Eq.~(\ref{eq:boundonCheegerconstant}) (symbols) as a function of the system size for various interaction energy $V$. For all figures, we take $E=0.2$.}
  \label{fig:Gap}
\end{figure}

We now determine the scaling of the mixing time with the system size. To do so, we find an upper bound on the spectral gap of systems with detailed balance which, combined with the inequality of Eq.~(\ref{eq:lienGapmixingtime}), estimates the mixing time scaling.

For Markov processes with detailed balance, we can estimate the spectral gap from the Cheeger bound \cite{Lawler1988}. Indeed, in this case the Markov operator is symmetrizable, and we can use known results on the gap of symmetric matrices. We introduce the matrix $D_{\sqrt{\pi}}$ as the diagonal matrix with elements $D_{\sqrt{\pi}}(n,n) \equiv \sqrt{\pi_n}$, then the matrix $D_{\sqrt{\pi}} M D_{\sqrt{\pi}}^{-1}$ is symmetric if the Markov matrix $\bm{K}$ respects detailed balance. We introduce the Cheeger constant as
\begin{equation}
  \label{eq:cheegercst}
  \Phi = \inf_{\Xi \subset \Omega, \, 0 < \pi(\Xi)\leqslant 1/2} \f{Q(\Xi^c,\Xi)}{\pi(\Xi)}
\end{equation}
where  $\Xi$ is any subset of the set of state of our system $\Omega$ such that $\pstat(\Xi) = \sum_{x \in \Xi} \pstat(x) <1/2$ and,
\begin{equation}
  \label{eq:proba_low_subset}
  Q(\Xi^c,\Xi) = \sum_{x,y | x\in \Xi,\, y\in\Xi^c} \pstat(x) K({y,x})
\end{equation}
is the sum of probability flow from the subset $\Xi$ to the complementary of $\Xi$, denoted $\Xi^c$. From Ref.~\cite{Lawler1988}
\begin{equation}
  \label{eq:cheegerbounds}
  \Delta \lambda \leqslant 2 \Phi
\end{equation}
Computing the Cheeger constant is difficult in general, but it is possible to bound it. For a subset $\Xi$, the probability flow from $\Xi$ to $\Xi^c$ is bounded by
\begin{equation}
  \label{eq:bound_proba_flow_subset}
  Q(\Xi^c,\Xi) \leq N_b \pstat({\acute{x}}) K({\acute{y},\acute{x}})
\end{equation}
where $N_b$ is the number of edge connecting $\Xi$ and $\Xi^c$ and
\begin{equation}
  \label{eq:Cheeger_bound_edge_approx}
  (\acute{x}, \acute{y}) = \arg\max_{x \in \Xi,\, y \in \Xi^c }\pstat(x) K({y,x}),
\end{equation}
the edge supporting the biggest probability flow from $\Xi$ to $\Xi^c$. Denoting then $\tilde x$ the most probable state in $\Xi$, we have
\begin{equation}
  \label{eq:bound_Cheeger_1}
  \Phi \leqslant \f{N_b\pstat({\acute{x}}) K({\acute{y},\acute{x}})}{\pstat({\tilde x})}.
\end{equation}

Since we are mainly interested in the large size limit of the mixing time, we now assume that the stationary probability satisfies a large deviation principle
\begin{equation}
  \label{eq:LDF_size_cheeger}
  \pstat(x) \simeq_{N \to +\infty} e^{-N \mathcal{I}(x)}.
\end{equation}
We consider a connected subset of state $\Xi$ such that the state $x^\MF$ with $\mathcal{I}(x^\MF) = 0$ is not in $\Xi$. For large enough $N$, the probability of $\Xi$ is surely less than $1/2$.
Using the bound (\ref{eq:bound_Cheeger_1}) on the Cheeger constant, we find a large deviation estimate as
\begin{equation}
  \label{eq:boundonCheegerconstant}
  \Phi \leqslant \f{N_b\pstat({\acute{x}}) K({\acute{y},\acute{x}})}{\pstat({\tilde x})} \simeq N_b  K({\acute{y},\acute{x}}) e^{N(\mathcal{I}(\tilde x)-\mathcal{I}(\acute x))}.
\end{equation}
Therefore, if it exists $\Xi$ such that the LDF $\mathcal{I}(x)$ has a local minimum $\tilde x \in \Xi$ that is not a global minimum, we will have $\mathcal{I}(\tilde x)-\mathcal{I}(\acute x) <0 $. Then, if the product $ N_b  K({\acute{y},\acute{x}})$ is not diverging exponentially, we have bounded the spectral gap by something going to $0$ as $N \to + \infty$. Hence, the mixing time diverges with the system size, and the divergence is even exponential.

In our case, we consider the subset of state $\Xi^- = \{x | x \leqslant 0 \}$, if $E\geqslant 0$ it has a stationary probability less than $1/2$, otherwise if $E<0$, we consider the subset $\Xi^+ = \{x | x \geqslant 0 \}$.  We have then $N_b=1$, $\acute{x}=0^-$ and $\acute{y}=0^+$, the lower and upper closest states to $0$. The stationary probability respects a large deviation principle, with a local minimum appearing after the phase transition, see Fig.\ref{fig:Gap}b. Therefore, the mixing time is diverging after the phase transition, but not before.

These results are confirmed by numerical computation of the spectral gap. On Fig.~\ref{fig:Gap}a, we plot the spectral gap as a function of the system size $N$ and of the interaction energy $V$. Before the phase transition, the spectral gap remains finite and so does the mixing time. After the phase transition, the spectral gap goes to zero when increasing $N$. The speed of convergence of the spectral gap is well catched by the bounds~(\ref{eq:boundonCheegerconstant}) as indicates the exponential decay with increasing $N$ shown on Fig.~\ref{fig:Gap}c. The mixing time is so at least diverging exponentially with the system size.

\subsection{Mixing time as a criteria to study fluctuations}
From the previous results, we are now able to explain the effective transition from non-ergodicity to ergodicity as the observation time becomes longer than the mixing time. At the level of the dynamical ensembles, this corresponds to a transition from non-equivalence to equivalence of ensembles. The fluctuations differ before and after the mixing time that in addition is diverging with the system size $N$. In a first place, if we take the long time limit before large size limit, we overcome the mixing time and the system stays ergodic. As the equivalence remains valid, the fluctuations are correctly given by the canonical LDF. In a second place, if the large size limit is taken first, the mixing time is infinite such that we stay in the regime of fluctuations before the mixing time, the system is not ergodic and the LDF is likely to be non convex. Large deviation theory aims at describing asymptotic fluctuations, but the limits should not be understood strictly: it is fundamental to distinguish the mixing time and its dependency on the system size to interpret them correctly and use this framework for real experimental systems or numerical simulations. 

%%%%%%%%%%%%%%%%%%%%%%%%%%%%%%%%%%%%%%%%%%%%%%%%%%%%%%%%%%%%%%
\section{Discussion and conclusion}
%%%%%%%%%%%%%%%%%%%%%%%%%%%%%%%%%%%%%%%%%%%%%%%%%%%%%%%%%%%%%%
\label{sec:discussion}

Like for equilibrium ensembles, ergodicity breaking is fundamental to understand the non-equivalence of dynamical ensembles. Considering the connection between state variables and dynamical observables, equilibrium systems without equivalence of equilibrium ensembles will also exhibit non equivalence of dynamical ensembles. We have provided general arguments supporting that ergodicity is required, but may not be sufficient to observe non-equivalence between the microcanonical and canonical dynamical ensembles. We have considered a system without ergodicity (by construction) and one with an emergent non ergodicity. In both cases we have obtained LDFs with an explicit dependency in the initial condition, the so-called partial LDF. From this partial LDF, the microcanonical LDF follows from minimizing on the initial condition. For systems with emergent ergodicity breaking we have discussed the physical meaning of both the  microcanonical (non-convex) and canonical (convex) LDFs. We did so by comparing the observation time and the mixing time after which the initial condition can be forgotten.

At a different level of large deviation, one may investigate the structure of the trajectories corresponding to the null minima of the LDF and the relative weight of these minima. When the system is not ergodic and the LDF is non-convex, we observe separate zeros of the LDF that can be associated to the most probable values of the dynamical observable. In this case, trajectories remain in a subpart of the state space given by the initial condition. For the fully connected Ising model, the most probable states are the MF magnetizations and their relative probabilities are determined from the initial probability to start within each subpart of the state space.

On the contrary, when the system is ergodic and the LDF convex, we observe a whole continuous set of magnetizations and activities for which the LDF is zero, \textit{e.g.} the triangle region in Fig.~\ref{fig:ConvexLDF}c. These values corresponds to trajectories where the system spends a fraction of time around one MF state and another fraction around another MF state. Then, the final value of the observables is the time average on these trajectories. The move between MF states is allowed thanks to ergodicity. These trajectories are instantons of the Lagrangian, \textit{i.e.} trajectories going from a stationary solution of Euler-Lagrange equations to another~\cite{Freidlin1984,Onsager1953,Grafke2015}. In this paper, we did not take instantons into account for the minimization of Eq.~(\ref{eq:CGFfrompropagatorGF}). However, taking them into account does not modify the canonical LDF, but gives instead another LDF at a different large deviation speed~\cite{Nickelsen2018}. 

In our framework, we have assumed that it is always possible to find at least one stationary stochastic process (called driven process) that reproduces a fluctuation of the dynamical observables as a typical event. Ensemble equivalence means that this driven process is unique and does not depend on the initial condition. In our framework, ensemble non-equivalence means that there are several driven processes according to the initial conditions. We did not consider the case where no stationary driven process exists. This may happen for diffusive processes for which the state space is non-compact and the system relaxes while never reaching a stationary state~\cite{Szavits-Nossan2015,Nemoto2012}.

Model of glasses are known to have large relaxation time scales that are beyond numerical and experimental reach. The characterization of their fluctuations during this relaxation is an active field of research. Since we characterized fluctuations before and after mixing time, we expect that our framework could be useful to study metastable states of glasses.

%%%%%%%%%%%%%%%%%%%%%%%%%%%%%%%%%%%%%%%%%%%%%%%%%%%%%%%%%%%%%%
\section*{Acknowledgement} 
%%%%%%%%%%%%%%%%%%%%%%%%%%%%%%%%%%%%%%%%%%%%%%%%%%%%%%%%%%%%%%

We acknowledge Massimiliano Esposito for his advice on this project that started when GV was a post-doctoral fellow in his research team. 

\bibliography{These,ArXiv,Maths}

\appendix

\section{Derivation of the propagator for generating function}
\label{app:derivationPropagator}
We look for an expression of the propagator of the generating function $G(x_{\mathrm{f}},x_{\mathrm{i}},\kappa,\gamma)= \langle e^{NT(\kappa \obsmag+\gamma a)} \rangle_{x_{\mathrm{f}},x_{\mathrm{i}}}$ as a path integral over all trajectories for the system state $n$. We use the discrete time $t_k=k \dd  t$ corresponding to the $k$'s time step of duration $\dd   t$, with the final time $T=t_L = L \dd  t $. We write the system state $n_{t_k}=n_k$ for short. This state at time $t_k$ changes by $n_{k}-n_{k-1}=2\epsilon_k = -2, 0$ or $2$ when a spin jumps from up to down, does'nt jump or jumps from down to up. The sum over all paths for a given initial condition is
\begin{equation}
\sum_{[\epsilon_k]} = \sum_{\epsilon_0}\sum_{\epsilon_1} \cdots \sum_{\epsilon_{N-1}}.
\end{equation}
Since jump probabilities depends on the magnetization, the path integration should include a sum on the system state $n_k$ for all $k$
\begin{equation}
\sum_{[n_k]}= \sum_{n_1}\sum_{n_2} \cdots \sum_{n_{L-1}},
\end{equation}
with the condition that one spin at most jumps at each time step. In the above sums, the index $n$ goes from $-N$ to $N$. All in all, the propagator of the generating function writes
\begin{equation}
G(n_{\mathrm{i}},n_{\mathrm{f}},\kappa,\gamma)  =\sum_{[n_k,\epsilon_k]} \left [ \prod_{k=1}^{L} p(n_{k}|n_{k-1}) \delta(n_{k}-n_{k-1}-2\epsilon_{k-1}) \right ]  e^{\kappa \left( \sum_{k=1}^L n_{k-1} \dd t \right) +\gamma \left( \sum_{k=1\, ,\epsilon_k \neq 0}^L 1 \right) },
\end{equation}
with $p(n_{k}|n_{k-1}) $ the probability of the transition $ n_{k-1} \rightarrow n_{k}$. Let's use the Laplace representation of the Dirac distribution
\begin{equation}
\delta(n_{k}-n_{k-1}-2\epsilon_{k-1}) = \int_{i\bromwich_k-\infty}^{i\bromwich_k+\infty} \frac{\dd  \mathfrak{p}_{k}}{2  \pi}e^{ i \mathfrak{p}_{k}(n_{k}-n_{k-1}-2\epsilon_{k-1})}, \quad \bromwich \in \mathbb{R},
\end{equation}
for every integer $k$ between $1$ and $L$ to get 
\begin{equation}
G(n_{\mathrm{i}},n_{\mathrm{f}},\kappa,\gamma) = \sum_{[n_k,\epsilon_k]} \int \left ( \prod_{k=1}^{L}  \frac{\dd \mathfrak{p}_k}{2 \pi} \right ) 
 \left [ \prod_{k=1}^{L} p(n_{k}|n_{k-1}) e^{i \mathfrak{p}_k(n_{k}-n_{k-1}-2\epsilon_{k-1}) 
+\kappa n_{k-1} \dd t + \gamma}  \right ].
\end{equation}
In the next step, we sum over $[\epsilon_k]$ to obtain
\begin{multline}
G(n_{\mathrm{i}},n_{\mathrm{f}},\kappa,\gamma) =   \sum_{[n_k]}  \int \left (\prod_{k=1}^{L}  \frac{\dd  \mathfrak{p}_k}{2 \pi} e^{i \mathfrak{p}_k(n_{k}-n_{k-1})} \right ) \\
 \prod_{k=1}^{L} \left [ p(n_{k-1}|n_{k-1}) + \sum_{\epsilon=\pm 1}  p(n_{k-1}+\epsilon|n_{k-1}) e^{-2i\mathfrak{p}_k \epsilon 
 +\gamma }  \right ]e^{\kappa n_{k-1}\dd t}.
\end{multline}
The propagator when staying in the same state during a time step is $p(n_{k}|n_{k}) =  1-\linebreak (\sum\limits_{\epsilon=\pm1}K_{n_k+2\epsilon,n_k}) \dd  t$, and when changing of state writes $p(n_{k}+\epsilon|n_{k}) =  K_{n_k+2\epsilon,n_k} \dd  t$. Exponentiating the product yields
\begin{multline}
G(n_{\mathrm{i}},n_{\mathrm{f}},\kappa,\gamma) =  \sum_{[n_k]}   \int \left ( \prod_{k=1}^{L} \frac{\dd  \mathfrak{p}_k}{2 \pi}  \right )  \exp{  \left(i \dd t \sum_{k=1}^{L}  \mathfrak{p}_k \f{(n_{k}-n_{k-1})}{\dd t} \right) } \\
 \exp \left [ \sum_{k=1}^{L} \dd  t \left( \sum_{\epsilon} K_{n_{k-1}+2\epsilon,n_{k-1}} \left( e^{-2i \mathfrak{p}_k\epsilon +
 \gamma  } -1 \right)+\kappa n_{k-1} \right) \right ].
\end{multline}
The propagator of the generating function becomes in the continuous limit
\begin{equation}
G(x_{\mathrm{f}},x_{\mathrm{i}},\kappa,\gamma)=  \int \mathcal{D} [x] \int \mathcal{D}[\mathfrak{p}] 
 \exp \left[ N \int_{0}^{T} \dd  t   i \mathfrak{p}_t \dot x_{t} - \int_{0}^{T} \dd t \left(\sum_{\epsilon} J_\epsilon(x_{t}) \left(1- e^{-2 i \mathfrak{p}_t\epsilon 
 +   \gamma  } \right) +\kappa x_{t} \right) \right ]
\end{equation}
where $\int \mathcal{D} [x]$ (resp. $\int \mathcal{D} [\mathfrak{p}]$) is a short notation for the integral on the paths of density (resp. momentum) with given initial and final values.
Finally, the path integral expression of the generating function is given by
\begin{equation}
G(x_{\mathrm{f}},x_{\mathrm{i}},\kappa,\gamma)=  \int \mathcal{D} [x] \int \mathcal{D}[\mathfrak{p}] \exp \left( N \A_{\kappa,\gamma} [x,\mathfrak{p},T] \right )\label{eq:PathIntegralGF}
\end{equation}
with the action $\A_{\kappa,\gamma}$ being a functional of the paths $[x,\mathfrak{p}]$
\begin{equation}
\A_{\kappa,\gamma} [x,\mathfrak{p},T] = \int_{0}^{T} \dd  t   i \mathfrak{p}_t \dot x_{t}  - \int_{0}^{T} \dd t \Hr (x_{t},\mathfrak{p}_t,\kappa,\gamma). \label{eq:action}
\end{equation}
We have introduced the Hamiltonian function 
\begin{equation}
\Hr (x,\mathfrak{p},\kappa,\gamma) = \sum_{\epsilon=\pm 1} \left [ 1-\exp{ \left( -2 i \mathfrak{p} \epsilon 
+\gamma  \right) }  \right ] J_\epsilon(x) -\kappa x. \label{eq:Hamiltonien}
\end{equation}
This function is the representation in the continuous limit of a tridiagonal Metzler matrix, whose spectrum is real~\cite{*[{}][{This property is stated in problem 5, page 174.}]{Horn2012}}. As a consequence, $i \mathfrak{p}$ is a real number in the following.

\section{WKB approximation}
\label{app:WKBapprox}
We now use a saddle point integration to obtain the leading order in $N$ of the propagator of the generating function $G(x_{\mathrm{f}},x_{\mathrm{i}},\kappa,\gamma)$ which is asymptotically exact in the large size limit. For each integration, we deform the contour so that it goes through the saddle point %$p_{s}$ 
of $\A_{\kappa,\gamma} [x,\mathfrak{p},t]$. Therefore, the saddle point is the minimum of $\A_{\kappa,\gamma} [x,\mathfrak{p},t]$ solving
\begin{align}
\f{\delta}{\delta \mathfrak{p}_t}\A_{\kappa,\gamma} [x,\mathfrak{p},T] = 0, \\
\f{\delta}{\delta x_t}\A_{\kappa,\gamma} [x,\mathfrak{p},T] = 0,
\end{align}
where $\f{\delta}{\delta x_t}$ denotes a functional derivative.

In this case, the saddle point calculation is equivalent to the WKB approximation of quantum mechanics. The propagator of the generating function will be given by the paths maximizing the action, i.e. the classical paths starting at $x_{\mathrm{i}}$ and ending at $x_{\mathrm{f}}$. These paths solve the Hamilton's equations
\begin{align}
i \dot x_t  =&  \,\partial_\mathfrak{p} \Hr (x_t,\mathfrak{p}_t,\kappa,\gamma),\label{eq:xEvol}  \\
i \dot{\mathfrak{p}}_t =& - \partial_x \Hr (x_t,\mathfrak{p}_t,\kappa,\gamma). \label{eq:pEvol}
\end{align}
The propagator of the generating function can be written as
\begin{equation}
  \label{eq:propagatorGF_WKB}
G(x_{\mathrm{f}},x_{\mathrm{i}},\kappa,\gamma) \simeq_{ N \to +\infty} \exp \left[ N\int_{0}^{T} \dd  t \left(  i \mathfrak{p}_t \dot x_{t}   - \Hr (x_{t},\mathfrak{p}_t,\kappa,\gamma) \right) \right],
\end{equation}
where $(x,\mathfrak{p})$ correspond to the solutions of Eq.~(\ref{eq:xEvol}-\ref{eq:pEvol})  with initial and final conditions $x_{\mathrm{i}}$ and $x_{\mathrm{f}}$. Moreover the Hamiltonian~(\ref{eq:Hamiltonien}) is time independent, and is hence a conserved quantity along the trajectory. Maupertuis' action $\int_{0}^{T} \dd  t  i \mathfrak{p}_t \dot x_{t}$ is a function of the Hamiltonian and writes
\begin{equation}
  \label{eq:maupertuisaction}
  \int_{0}^{T}\dd  t i \mathfrak{p}_t \dot x_{t} = \int_{x_{\mathrm{i}}}^{x_{\mathrm{f}}}\dd  x i \mathfrak{p}(x).
\end{equation}
where $\mathfrak{p}(x)$ can be obtained by inverting Eq.~(\ref{eq:Hamiltonien}). We observe that Maupertuis' action is always negative in view of the clockwise motion along the orbits
that can be justified by inspection of Eq.~(\ref{eq:xEvol}). 

In this large size limit, we can also use a saddle point integration on $\mathfrak{p}$ to switch to the Lagrangian framework. 
To do so, we look for the solution of Eq.~(\ref{eq:xEvol}). We derive a quadratic equation either for $e^{i\mathfrak{p}_{s}}$ or for $e^{-i\mathfrak{p}_{s}}$ whose solutions are
\begin{equation}
e^{\pm 2i\mathfrak{p}_{s}}=\f{\mp \dot x_{s} + \sqrt{\dot{x_{s}}^2+\varphi(x_{s},\gamma)}}{2 J_\mp(x_{s}) e^{\gamma}},
\end{equation}
with $\varphi(x,\gamma)$ defined in Eq.~\eqref{eq:varphi}. Inserting this into Eq.~\eqref{eq:PathIntegralGF} leads to the propagator of the generating function of Eq.~\eqref{eq:propagatorGFclass} with Lagangian~\eqref{eq:Lagrangien}.

\end{document}